\newcommand{\comment}[1]{}
\newcommand{\seclabel}[1]{\label{sec:#1}}
\newcommand{\Secref}[1]{Section~\ref{sec:#1}}
\newcommand{\secref}[1]{\mbox{Section~\ref{sec:#1}}}
\newcommand{\figlabel}[1]{\label{fig:#1}}
\newcommand{\figref}[1]{\mbox{Figure~\ref{fig:#1}}}
\newcommand{\eqlabelx}[1]{\label{eq:#1}}
\newcommand{\eqrefx}[1]{(\ref{eq:#1})}
\newtheorem{thm}{Theorem}{\bfseries}{\itshape}
\newcommand{\thmlabel}[1]{\label{thm:#1}}
\newcommand{\thmref}[1]{Theorem~\ref{thm:#1}}
\newtheorem{lem}{Lemma}{\bfseries}{\itshape}
\newcommand{\lemlabel}[1]{\label{lem:#1}}
\newcommand{\lemref}[1]{Lemma~\ref{lem:#1}}
\newtheorem{assumption}{Assumption}{\bfseries}{\rm}
\newcommand{\etal}{\emph{et al.}}
\newcommand{\ceil}[1]{{\left\lceil #1 \right\rceil}}
\newcommand{\floor}[1]{{\left\lfloor #1 \right\rfloor}}
\newcommand{\R}{\mathbb{R}}
\newcommand{\N}{\mathbb{N}}
\newcommand{\boundary}{\partial}
\newcommand{\closure}{\mathrm{clo}}
\newcommand{\interior}{\mathrm{int}}
\newcommand{\z}[1]{{\hat{#1}}}
\newcommand{\depth}{\mathrm{depth}}
\newcommand{\prnt}{\mathrm{parent}}
\title{\MakeUppercase{Entropy, Triangulation, and Point Location} \\
	\MakeUppercase{in Planar Subdivisions}}
\author{%
  S\'ebastien Collette%
    \thanks{Charg\'e de Recherches du F.R.S.-FNRS.} \\ 
    \textit{Universit\'e Libre de Bruxelles}
  \and Vida Dujmovi\'c%
    \thanks{Research partially supported by NSERC and the Ontario
            Ministry of Research and Innovation.} \\ 
    \textit{Carleton University}
  \and John Iacono%
    \thanks{Research partially supported by
            NSF grants OISE-0334653 and CCF-0430849 
            and by an Alfred P. Sloan research fellowship.} \\ 
    \textit{Polytechnic Institute of NYU}
  \and Stefan Langerman%
    \thanks{Ma\^{\i}tre de recherches du F.R.S.-FNRS.} \\ 
    \textit{Universit\'e Libre de Bruxelles}
  \and Pat Morin%
    \thanks{This research began
while the author was a visiting researcher at the Universit\'e Libre de
Bruxelles and was completed while
 the author was a visiting researcher at National ICT 
Australia and the University of
Sydney. Research partially supported by NSERC, FNRS, the Ontario
            Ministry of Research and Innovation, Canada Foundation 
            for Innovation, The University of Sydney, and National 
            ICT Australia.} \\ 
    \textit{Carleton University}
}
\begin{document}
\maketitle

\begin{abstract} 
A data structure is presented for point location in connected planar
subdivisions when the distribution of queries is known in advance.
The data structure has an expected query time that is within a
constant factor of optimal.  More specifically, an algorithm is
presented that preprocesses a connected planar subdivision $G$ of size
$n$ and a query distribution $D$ to produce a point location data
structure for $G$. The expected number of point-line comparisons
performed by this data structure, when the queries are distributed
according to $D$, is $\tilde H + O(\tilde H^{2/3}+1)$ where $\tilde
H=\tilde H(G,D)$ is a lower bound on the expected number of point-line
comparisons performed by any linear decision tree for point location
in $G$ under the query distribution $D$.  The preprocessing algorithm
runs in $O(n\log n)$ time and produces a data structure of size
$O(n)$.  These results are obtained by creating a Steiner
triangulation of $G$ that has near-minimum entropy.
\end{abstract}

%%%%%%%%%%%%%%%%%%%%%%%%%%%%%%%%%%%%%%%%%%%%%%%%%%%%%%%%%%%%%%%%%%%
\section{Introduction}
\seclabel{intro}

The planar point location problem is the classic search problem in
computational geometry. Given a planar subdivision $G$,\footnote{A
\emph{planar subdivision} is a partitioning of the plane into points
(called \emph{vertices}), open line segments (call \emph{edges}), and
maximal connected 2-dimensional regions (called \emph{faces}).} the
planar point location problem asks us to construct a data structure so
that, for any query point $q$, we can quickly determine which face of
$G$ contains $q$.\footnote{In the degenerate case where $q$ is a
vertex or contained in an edge of $G$ any face incident on that
vertex/edge may be returned as an answer.}

The history of the planar point location problem parallels, in many
ways, the history of binary search trees.  After a few initial attempts
\cite{dl76,lp77,p81}, asymptotically optimal (and quite different)
linear-space $O(\log n)$ query time solutions to the planar point
location problem were obtained by Kirkpatrick \cite{k83}, Sarnak and
Tarjan \cite{st86}, and Edelsbrunner \etal\ \cite{egs86} in the mid
1980s.  These results were based on hierarchical simplification, data
structural persistence, and fractional cascading, respectively.  All
three of these techniques have subsequently found many other
applications.  An elegant randomized solution, combining aspects of
all three previous solutions, was later given by Mulmuley \cite{m90},
and uses randomized incremental construction, a technique that has
since become pervasive in computational geometry
\cite[Section~9.5]{bcko08}.  Preparata \cite{p90} gives a
comprehensive survey of the results of this era.

In the 1990s, several authors became interested in determining the
exact constants achievable in the query time.  Goodrich \etal\
\cite{gor97} gave a linear-size data structure that, for any query,
requires at most $2\log n + o(\log n)$ point-line comparisons and
conjectured that this query time was optimal for linear-space data
structures.\footnote{Here and throughout, logarithms are implicitly base 2
unless otherwise specified.} The following year, Adamy and Seidel
\cite{as98} gave a linear-space data structure that answers queries
using $\log n + 2\sqrt{\log n} + O(\log\log n)$ point-line comparisons
and showed that this result is optimal up to the third term.

Still not done with the problem, several authors considered the point
location problem under various assumptions about the query
distribution.  All these solutions compare the expected query time to
the \emph{entropy bound};  in a planar subdivision $G$ with $m$ faces
$F_1,\ldots,F_m$, if $\Pr(F_i)$ is the probability that $q$ is
contained in $F_i$, then no algorithm that makes only binary decisions
can answer queries using an expected number of decisions that is fewer
than 
\begin{equation}
    H = H(G,D) = \sum_{i=1}^m \Pr(F_i)\log(1/\Pr(F_i)) \enspace . 
	\eqlabelx{entropy-face}
\end{equation}

In the previous results on planar point location, none of the query
times are affected significantly by the structure of $G$;  they hold
for arbitrary planar subdivisions.  However, when studying point
location under a distribution assumption the problem becomes more
complicated and the results become more specific.  A \emph{connected
subdivision} is a planar subdivision whose underlying (vertex and
edge) graph is connected.  A \emph{convex subdivision} is a planar
subdivision whose faces are all convex polygons, except the outer
face, which is the complement of a convex polygon.  A
\emph{triangulation} is a convex subdivision in which each face has at
most 3 edges on its boundary.

Note that, if every face of $G$ has a constant number of sides, then
$G$ can be augmented, by the addition of extra edges, so that it is a
triangulation without increasing \eqrefx{entropy-face} by more than a
constant.  Similarly, in a convex subdivision $G$ where the query
distribution $D$ is uniform within each face of $G$, the faces of the
subdivision can be triangulated without increasing the entropy by more
than a constant \cite{amm00}. Thus, in the following we will simply
refer to results about triangulations where it is understood that
these also imply the same result for planar subdivisions with faces of
constant size or convex subdivisions when the query distribution is
uniform within each face.

Arya \etal\ \cite{acmr00} gave two results for the case where the
query point $p$ is chosen from a known distribution where the $x$ and
$y$ coordinates of $p$ are chosen independently and $G$ is a convex
subdivision.  They gave an $O(n)$ space data structure for which the
expected number of point-line comparisons is at most $4H+O(1)$ and an
$O(n^2)$ space data structure for which the expected number of
point-line comparisons is at most $2H+O(1)$.  The assumption about the
independence of the $x$ and $y$ coordinates of $p$ is crucial to the
these results.

For arbitrary distributions that are known in advance, several results
exist.  Iacono \cite{i01,i04} showed that, for triangulations, a
simple variant of Kirkpatrick's original point location structure
gives a linear space, $O(H+1)$ expected query time data structure.
Simultaneously, and independently, Arya \etal\ \cite{amm01b} showed
that a variant of Mulmuley's randomized data structure also achieves
$O(H+1)$ expected query time.  A sequence of papers by Arya \etal\
\cite{amm00,amm01a,ammw07} has recently culminated in an $O(n)$ space
data structure for point location in triangulations with query time
$H+O(H^{1/2} + 1)$ \cite{ammw07}.

In the current paper, we show that, for any connected planar
subdivision, there exists a data structure of size $O(n)$ that can
answer point location queries using $\tilde H + O(\tilde H^{2/3}+1)$
point/line comparisons.  Here, $\tilde H=\tilde H(G,D)$ is a lower
bound on the expected cost of any linear decision tree that solves
this problem.  Note that $\tilde H$ is often greater than the quantity
$H$ defined above and this is necessarily so.  To see this, consider
that the problem of testing whether a query point is contained in a
simple polygon $P$ with $n$ vertices is a special case of planar point
location in a connected planar subdivision.  However, in this special
case the subdivision only has 2 faces, so $H\le 1$.  It seems unlikely
that, for any simple polygon $P$ and any probability measure $D$ over $\R^2$,
it is always possible to test in $O(1)$ expected time if a point
$p$ drawn from $D$ is contained in $P$.  Indeed, it is not hard to
design a convex polygon $P$ and distribution $D$ so that the expected
cost of any algebraic decision tree for point location in $P$, under
query distribution $D$, is $\Omega(\log n)$.

Note that all known algorithms for planar point location that do not
place special restrictions on the input subdivision can be described
in the linear decision tree model of computation.\footnote{Although
significant breakthroughs have recently been made in this area
\cite{c06,p06}, we deliberately do not survey algorithms that require
the vertices of the subdivision to be on integer coordinates.}  The
data structures presented in the current paper are the most general
results known about planar point location and imply, to within a
lower order term, all of the results discussed in the introduction.

We achieve our results by showing how to compute a Steiner
triangulation $\Delta=\Delta(G,D)$ of $G$ that has nearly minimum
entropy over all possible triangulations of $G$ and then proving that
the entropy of a minimum-entropy Steiner triangulation of $G$ is a
lower bound on the cost of any linear decision tree for point location
in $G$.  By then applying the recent result of Arya \etal\ to the
Steiner triangulation $\Delta$ we obtain upper and lower bounds that
match to within a lower-order term.

A preliminary version of this paper, which dealt only with convex
subdivisions, has appeared in the Proceedings of the 19th ACM-SIAM
Symposium on Discrete Algorithms (SODA~2008) \cite{cdilm08}.

The remainder of this paper is organized as follows:  \Secref{prelim}
presents definitions and notations used throughout the paper.
\Secref{polygons} shows how to compute a near-minimum-entropy
triangulation of a simple polygon.  Finally, \secref{subdivisions}
applies these tools to obtain our point location structure for
connected planar subdivisions.

%%%%%%%%%%%%%%%%%%%%%%%%%%%%%%%%%%%%%%%%%%%%%%%%%%%%%%%%%%%%%%%%%%%
\section{Preliminaries}
\seclabel{prelim}

In this section we give definitions, notation, and background
required in subsequent sections.

\paragraph{Interiors and Boundaries.}
For a set $P\subseteq \R^2$, we denote the boundary of $P$ by $\boundary P$
and the interior of $P$ by $\interior(P)$.  The closure of $P$
is denoted by $\closure(P) = P\cup{\boundary P}$.

\paragraph{Triangles and Convex Polygons.}  For the purposes of this
paper, a \emph{triangle} is the common intersection of at most 3
closed halfplanes.  This includes triangles with infinite area and
triangles having 0, 1, 2, or 3, vertices. Similarly, a \emph{convex
$k$-gon} is the common intersection of at most $k$ closed halfplanes.

For a closed region $X\subseteq \R^2$, a \emph{triangulation} of $X$
is a set of triangles whose interiors are pairwise disjoint and whose
union is $X$.  We use the convention that, unless $X$ is explicitly
mentioned, the triangulation in question is a triangulation of $\R^2$.
This definition of a triangulation is often referred to as a
\emph{Steiner triangulation} since it allows vertices of the triangles
to be anywhere in $X$, and not at some finite predefined set of
locations.

\paragraph{Simple Polygons, Pseudotriangles, and Geodesic Triangles.}

A \emph{(near-simple) polygon} $P$ is a closed subset of $\R^2$ whose
boundary is piecewise linear and such that $\interior(P)$ is
homeomorphic to an open disk.  Note that this definition of a polygon
implies that every bounded face of a connected planar subdivision is a
polygon.  Also, triangles, as defined above, are polygons.  Note that
near-simple polygons are slightly more general than simple polygons,
for which $\boundary P$ is a simple closed curve.  However, our
definition is sufficiently close that algorithms designed for simple
polygons continue to work with near-simple polygons.

A reflex chain in a polygon $P$ is a consecutive sequence of vertices
$p_i,\ldots,p_j$ of $P$, where the internal angle at $p_k$ is at least
$\pi$, for all $k\in\{i+1,\ldots,j-1\}$. A \emph{pseudotriangle} is a
polygon whose boundary consists of 3 reflex chains.  An $i$-convex
pseudotriangle ($i\in\{0,1,2,3\}$) is a pseudotriangle in which $i$ of
the reflex chains consist of single line segments.

A \emph{shortest path} between points $a,b\in P$, denoted
$\overline{ab}_P$ is a curve of minimum length that is contained in
$P$ and that has $a$ and $b$ as endpoints.  For 3 points, $a,b,c\in
P$, a \emph{geodesic triangle} in $P$, denoted $\triangle_P abc$ is
the union of all shortest paths of the form $\overline{xc}_P$, where
$x\in\overline{ab}_P$.  Geodesic triangles are closely related to
pseudotriangles.  In particular, every geodesic triangle $t$ consists
of a pseudotriangle $\z t$ and three paths joining the three convex
vertices of $\z t$ to $a$, $b$, and $c$.

\paragraph{Classification Problems and Classification Trees.}

A \emph{classification problem} over a domain $\mathcal{D}$ is a
function $\mathcal{P}:\mathcal{D}\mapsto \{0,\ldots,k-1\}$.  A
$d$-ary \emph{classification tree} is a full $d$-ary tree\footnote{A
full $d$-ary tree is a rooted ordered tree in which each non-leaf node
has exactly $d$ children.} in which each internal node $v$ is labelled
with a function $P_v:\mathcal{D}\mapsto\{0,.\ldots,d-1\}$ and for
which each leaf $\ell$ is labelled with a value
$d(\ell)\in\{0,\ldots,k-1\}$. The \emph{search path} of an input $p$
in a classification tree $T$ starts at the root of $T$ and, at each
internal node $v$, evaluates $i=P_v(p)$ and proceeds to the $i$th
child of $v$.  We denote by $T(p)$ the label of the final (leaf) node
in the search path for $p$.  We say that the classification tree $T$
\emph{solves} the classification problem $\mathcal{P}$ over the domain
$\mathcal{D}$ if, for every $p\in \mathcal{D}$, $\mathcal{P}(p)=T(p)$.

In this paper, we are especially concerned with \emph{linear decision
trees}. These are binary classification trees for a problem
$\mathcal{P}$ over the domain $\R^2$.  Each internal node $v$ of a
linear decision tree contains a linear inequality $P_v(x,y)=ax+by \ge
c$, and the node evaluates to 1 or 0 depending on whether the query
point $(x,y)$ satisfies the inequality or not, respectively.
Geometrically, each internal node of $T$ is labelled with a directed
line and the decision to go to the left or right child depends on
whether $p$ is to the left or right (or on) this line.  An immediate
consequence of this is that, for each leaf $\ell$ of $T$, the closure
of $r(\ell)$ is a convex polygon. 

\paragraph{Probability.}

Throughout this paper $D$ is a probability measure over $\R^2$ that
represents the query distribution.  The notation $\Pr(X)$ denotes the
probability of event $X$ under the probability measure $D$.  The
notation $\Pr(Y|X)$ denotes the conditional probability of $Y$ given
$X$, i.e., $\Pr(Y|X)=\Pr(Y\cap X)/\Pr(X)$.  For any set $S$, we use
the shorthand $\cup S$ to denote $\bigcup_{s\in S} s$.

For a set $S$ of subsets of $\R^2$, we define the \emph{induced
entropy of $S$}, denoted by $H(S)$ as $H(S)=\sum_{s\in S}\Pr(s|{\cup
S})\log(1/\Pr(s|{\cup S}))$.  For two sets $S_1,S_2\subseteq\R^2$ with
$\cup S_1=\cup S_2$, the \emph{joint entropy} of $S_1$ and $S_2$, is
$H(S_1,S_2) = \sum_{s_1\in S_i}\sum_{s_2\in S_2} \Pr(s_1\cap
s_2)\log(1/\Pr(s_1\cap s_2))$.  It is well-known that $H(S_1,S_2)\le
H(S_1)+H(S_2)$ (see, for example, Gray \cite[Lemma~2.3.2]{g08}).

We will sometimes abuse terminology slightly by referring to a
triangulation $\Delta$ of $X$ as a \emph{partition} of $X$ into
triangles, although strictly speaking this is not true since the
triangles in $\Delta$ are closed sets that overlap at their
boundaries.  We will then continue the abuse by computing the induced 
entropy of $\Delta$.
This introduces a technical difficulty in that
$\sum_{t\in\Delta}\Pr(t)\ge 1$ and inequality is possible if there
exists sets $Y\subset \R^2$ such that the area of $Y$ is 0 and
$\Pr(Y)>0$.  To avoid this technical difficulty, we will assume that
$D$ is \emph{nice} in the sense that, if the area of $Y$ is 0 then
$\Pr(Y)=0$.   This implies that, for every $t$ in $\Delta$ $\Pr(t) =
\Pr(\interior(t))$.  This assumption will avoid lengthy technical but
uninteresting cases in our analysis.  In practice, this problem can
be avoided by using a symbolic perturbation of the query point.

The probability measures used in this paper are usually defined over
$\R^2$.  We make no assumptions about how these measures are
represented, but we assume that an algorithm can, in constant time,
perform each of the following two operations:
\begin{enumerate}
\item given a triangle $t$, compute $\Pr(t)$, and
\item given a triangle $t$ and a point $x$ at the
intersection of two of $t$'s supporting lines, compute a line $\ell$
that contains $x$ and that partitions $t$ into two open triangles
$t_0$ and $t_1$ such that $\Pr(t_0)\le\Pr(t_1)\le\Pr(t)/2$.
\end{enumerate}
Requirement 2 is used only for convenience in describing our data
structure.  It is not strictly necessary, but its use greatly
simplifies the exposition of our results.  To eliminate requirement 2,
one can use the same method described by Collette \etal\
\cite[Section~5]{cdilm08}.

For a classification tree $T$ that solves a problem
$P:\mathcal{D}\mapsto\{0,\ldots,k-1\}$ and a probability measure $D$
over $\mathcal{D}$, the \emph{expected search time} of $T$ is the
expected length of the search path for $p$ when $p$ is drawn at random
from $\mathcal{D}$ according to $D$.  Note that, for each leaf $\ell$
of $T$ there is a maximal subset $r(\ell)\subseteq \mathcal{D}$ such
that the search path for any $p\in r(\ell)$ ends at $\ell$.  Thus, the
expected search time of $T$ (under distribution $D$) can be written as
\[
     \mu_D(T) = \sum_{\ell\in L(T)} \Pr(r(\ell))\times \depth(\ell)
	\enspace ,
\]
where $L(T)$ denotes the leaves of $T$ and $\depth(\ell)$ denotes the
length of the path from the root of $T$ to $\ell$.  For any tree $T$
we use $V(T)$ to denote the vertices of $T$.

The following theorem, which is a restatement of (half of) Shannon's
Fundamental Theorem for a Noiseless Channel \cite[Theorem 9]{s48}, is
what all previous results on distribution-sensitive planar point
location use to establish their optimality:

\begin{thm}[Fundamental Theorem for a Noiseless Channel]\thmlabel{shannon}
Let $\mathcal{P}:\mathcal{D}\mapsto \{0,\ldots,k-1\}$ be a classification
problem and let $p\in \mathcal{D}$ be selected from a distribution $D$ such
that $\Pr\{\mathcal{P}(p)= i\}=p_i$, for $0\le i< k$.  Then, any
$d$-ary classification tree $T$ that solves $\mathcal{P}$ has
\begin{equation}
     \mu_D(T) \ge \sum_{i=0}^{k-1} p_i\log_d(1/p_i) \enspace .
	\eqlabelx{shannon}
\end{equation}
\end{thm}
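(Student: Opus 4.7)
The plan is to combine three standard ingredients from information theory: Kraft's inequality for the depths of the leaves of $T$, Gibbs' inequality, and the grouping property of entropy. First, I attach to each leaf $\ell \in L(T)$ the weight $q_\ell = \Pr(r(\ell))$. Since the regions $\{r(\ell)\}_{\ell \in L(T)}$ partition $\mathcal{D}$ (overlapping only on a $D$-measure-zero set of common boundaries, which the niceness assumption on $D$ renders harmless), we have $\sum_\ell q_\ell = 1$ and, by definition, $\mu_D(T) = \sum_\ell q_\ell \cdot \depth(\ell)$.

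Second, because $T$ is a full $d$-ary tree, Kraft's inequality gives $\sum_\ell d^{-\depth(\ell)} \le 1$, which I would verify by induction on the depth of $T$ (replacing a depth-$h$ leaf by an internal node with $d$ depth-$(h{+}1)$ children preserves the sum). Setting $r_\ell = d^{-\depth(\ell)}$ makes $\{r_\ell\}$ a sub-probability distribution on $L(T)$, and Gibbs' inequality (equivalently, non-negativity of the base-$d$ Kullback--Leibler divergence, extended to sub-distributions via the log-sum inequality) gives $\sum_\ell q_\ell \log_d(1/r_\ell) \ge \sum_\ell q_\ell \log_d(1/q_\ell)$. The left-hand side is exactly $\mu_D(T)$, so $\mu_D(T) \ge \sum_\ell q_\ell \log_d(1/q_\ell)$.

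Third, I group the leaves by their label. Let $L_i \subseteq L(T)$ collect all leaves labelled with value $i$. Because $T$ solves $\mathcal{P}$, the set $\{p \in \mathcal{D} : \mathcal{P}(p) = i\}$ is partitioned (again up to a $D$-measure-zero set) by $\{r(\ell) : \ell \in L_i\}$, and therefore $\sum_{\ell \in L_i} q_\ell = p_i$. The grouping property of entropy, which follows directly from the concavity of $x \mapsto -x \log_d x$ (or, equivalently, from non-negativity of conditional entropy), then gives $\sum_\ell q_\ell \log_d(1/q_\ell) \ge \sum_{i=0}^{k-1} p_i \log_d(1/p_i)$. Chaining this with the previous inequality yields \eqrefx{shannon}.

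The only non-mechanical step is the bookkeeping in the first paragraph: one must verify that the regions $r(\ell)$ genuinely partition $\mathcal{D}$ up to $D$-measure zero, so that $\{q_\ell\}$ is a bona fide probability distribution and the sum $\sum_\ell q_\ell \cdot \depth(\ell)$ really is the expected search time under $D$. Once that is in hand, Kraft's inequality, Gibbs' inequality, and the grouping property are all textbook, and the theorem follows by a short chain of inequalities.
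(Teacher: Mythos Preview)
Your argument is correct and is the standard proof of this statement: Kraft's inequality bounds $\mu_D(T)$ below by the leaf entropy $\sum_\ell q_\ell\log_d(1/q_\ell)$ via Gibbs' inequality, and the grouping property then reduces the leaf entropy to the class entropy $\sum_i p_i\log_d(1/p_i)$.

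The paper, however, does not supply its own proof of this theorem. It simply states the result and attributes it to Shannon \cite[Theorem~9]{s48}, treating it as background. So there is no ``paper's proof'' to compare against: you have written out a (correct, textbook) proof of a result the authors chose to quote rather than prove.
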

\thmref{shannon} is typically applied to the point location problem by
treating point location as the problem of classifying the query point
$p$ based on which face of $G$ contains it.  In this way, we obtain
the lower bound in \eqrefx{entropy-face}.

%%%%%%%%%%%%%%%%%%%%%%%%%%%%%%%%%%%%%%%%%%%%%%%%%%%%%%%%%%%%%%%%%%%
\section{Minimum Entropy Triangulations} 
\seclabel{polygons}

Let $P$ be a simple polygon with $n$ vertices, denoted
$p_0,\ldots,p_{n-1}$ as they occur, in counterclockwise order, on the
boundary of $P$.  We will show how to find a triangulation of $P$ that
has near-minimum entropy.  That is, we will find a triangulation
$\Delta=\Delta(P,D)$ such that $H(\Delta)$ is near-minimum over all
triangulations of $P$.  In order to shorten the formulas in this
section, we will implicitly condition the distribution $D$ on $P$.
More precisely, throughout this section the notation $\Pr(X)$ should
be treated as shorthand for $\Pr(X|P)$.

\subsection{The Triangulation $\Delta=\Delta(P,D)$}
\seclabel{delta}

Our triangulation algorithm is recursive and takes as input a polygon
$P$ and a reflex chain $p_i,\ldots,p_j$ on the boundary of $P$.  If
$P$ is a triangle, then there is nothing to do, so the algorithm
outputs $P$ and terminates. Otherwise, the
algorithm first selects a point $p_k$ on the boundary of $P$ and adds
all the edges of the geodesic triangle $t=\triangle p_ip_jp_k$ to the
triangulation $\Delta$.
Observe that removing $t$ from $P$ disconnects $P$ into
components $P_1,\ldots,P_m$ where $\closure(P_i)$ is a polygon that shares a
reflex chain $C_i$ with the pseudotriangle $t$ (see
\figref{ii}).  The point
$p_k$ is selected in such a way that, for all $i\in\{1,\ldots,m\}$, $\Pr(P_i)
\le (1/2)\Pr(P)$.\footnote{The existence of such a point $p_k$ is
readily established by a standard continuity argument; see Bose \etal\
\cite{bdhlim07} for an example.} Each of the sub-polygons $P_1,\ldots,P_m$
can then be triangulated recursively by applying the algorithm to
$P_i$ and the reflex chain $C_i$.

\begin{figure}
  \begin{center}
    \begin{tabular}{cc}
      \includegraphics{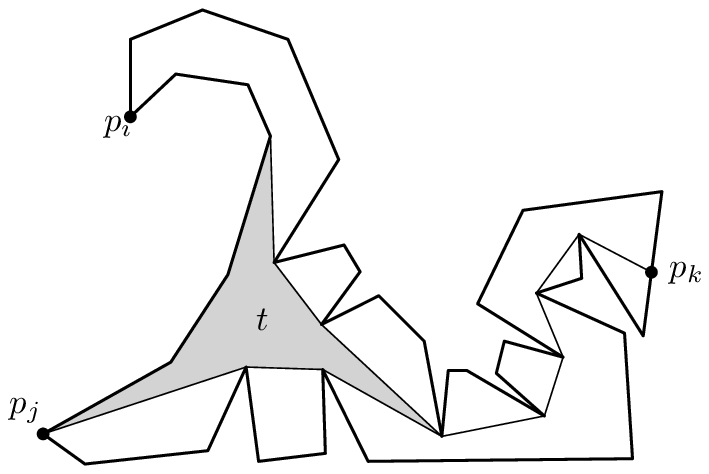} & \includegraphics{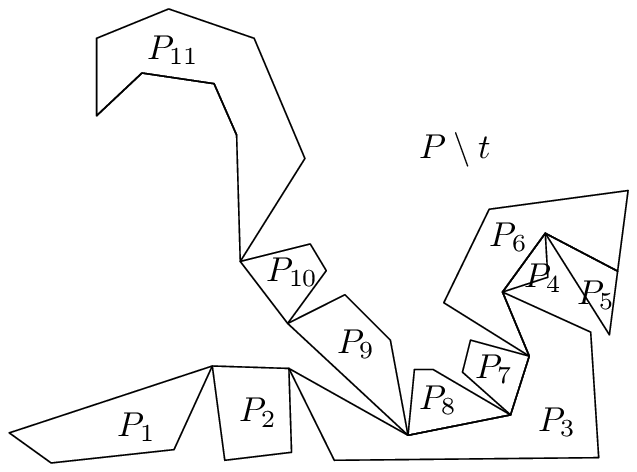} 
    \end{tabular}
  \end{center}
  \caption{The geodesic triangle $t=\triangle p_i p_j p_k$ partitions $P$ into several
pieces $P_1,\ldots,P_m$.}
  \figlabel{ii}
\end{figure}

To complete the triangulation $\Delta$ all that remains is to
partition $\z t=\closure(t\setminus \boundary t)$ into triangles.  To
do this, we first partition $\z t$ into at most one triangle $t'$ and
three
2-convex pseudotriangles $t_0,t_1,t_2$ as shown in
\figref{pt-partition}.a. Let $Q_i$ be the connected component of
$(\interior(P)\setminus \z t)\cup t_i$ that contains $t_i$.  To
complete the triangulation we will partition $t_i$ into triangles, for
each $i\in\{0,1,2\}$, using a recursive algorithm.  This algorithm
selects an edge $e_i$ of the reflex chain in $t_i$ and extends $e_i$
in both directions until it reaches the boundary of $t_i$ (see
\figref{pt-partition}.b).  The resulting line segment partitions $t_i$
into a triangle $t_i'$, and two 2-convex pseudotriangles $t_{i,0}$ and
$t_{i,1}$ that are triangulated recursively.  At the same time, $Q_i$
is partitioned into up to 4 pieces (see \figref{pt-partition}.c):

\begin{enumerate}
\item the triangle $t_i'$, and
\item a subpolygon $P_j$ incident to $e_i$,
\item The two connected components $Q_{i,0}$
and $Q_{i,1}$ of $Q_i\setminus t_i'$ that contain $t_{i,0}$ and
$t_{i,1}$, respectively.
\end{enumerate}
The edge $e_i$ is selected
so that $\Pr(Q_{i,b})\le (1/2)\Pr(Q_i)$ for each
$b\in\{0,1\}$.\footnote{The existence of such an edge $e_i$ is assured
by yet another continuity argument.}  This
completes the description of the triangulation $\Delta$.  A partially
completed triangulation is show in \figref{delta-example}.

\begin{figure}
  \begin{center}
    \begin{tabular}{ccc}
      \includegraphics{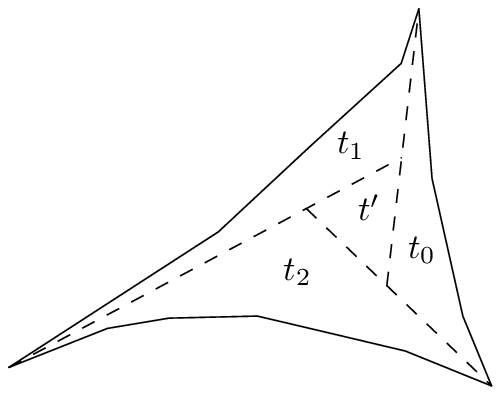} & 
      \includegraphics{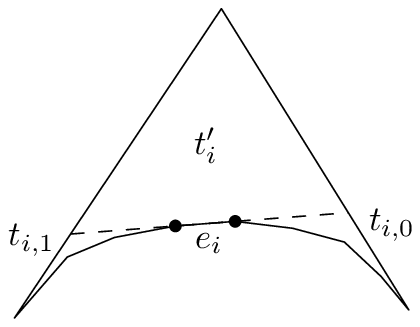} &
      \includegraphics{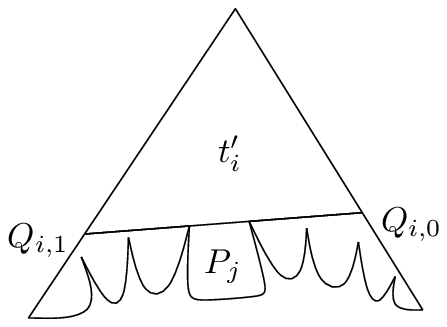} \\
      (a) & (b) & (c)
    \end{tabular}
  \end{center}
  \caption{Partitioning (a) a pseudotriangle $\z t$ into three 2-convex
pseudotriangles $t_0,t_1,t_2$ and one triangle $t'$ (b) a 2-convex
pseudotriangle $t_i$ into one triangle $t_i'$ and two 2-convex
pseudotriangles $t_{i,0}$ and $t_{i,1}$, and (c) $Q_i$ into 4 pieces.}
  \figlabel{pt-partition}
\end{figure}

\begin{figure}
  \begin{center}
      \includegraphics{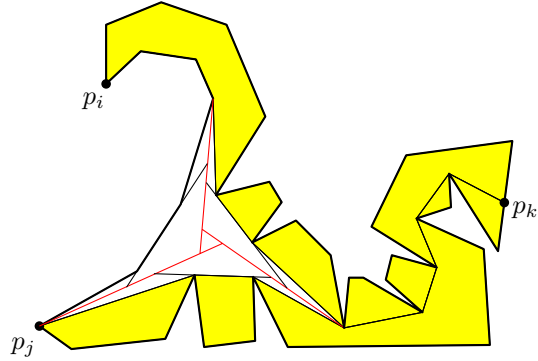}
  \end{center}
  \caption{The triangles obtained during the first level of recursive
triangulation.  The yellow subpolygons are triangulated recursively.}
  \figlabel{delta-example}
\end{figure}

\subsection{The $\Delta$-Tree $T=T(P,D)$}

In order to study the entropy of the triangulation $\Delta$ defined
above, we will impose a tree structure on the pieces of $P$ induced by
the triangles in $\Delta$.  The $\Delta$-tree $T=T(P,D)$ for $P$ is a
tree whose nodes are subpolygons of $P$ and which has the property
that, for any node $y$ that is the child of a node $x$, $y\subseteq
x$.

The tree $T$ has three different kinds of nodes, called P-nodes,
T-nodes, and Q-nodes.  The root $r$ of $T$ is the polygon $P$ and is
a \emph{P-node}.  The root of $T$ has the following children
(defined in terms of the construction algorithm in the previous
section; see \figref{delta-tree}):

\begin{enumerate}
\item Each subpolygon $P_i$ whose boundary does not share a segment
      with $\z t$ is a child of $r$ and is a P-node.  
\item The subpolygon $Q=\z t\cup Q_0\cup Q_1\cup Q_2$ is a child of $r$ and is
called a \emph{T-node}.
\end{enumerate}

The subpolygon $Q$ has three children $Q_0,Q_1,Q_2$ that are called
\emph{Q-nodes}.  The subtree rooted at $Q_i$ is a ternary tree
corresponding to the recursive partitioning of $t_i$ and $Q_i$ done by
the algorithm. The leaves of this subtree are P-nodes and the internal
nodes of this subtree are Q-nodes.  Each internal node has up to 3
children, up to 1 of which may be a P-node corresponding to a
subpolygon $P_j$ and up to two of which may be Q-nodes.

Note that the above definition yields a tree whose leaves are P-nodes
that correspond to the subpolygons $P_1,\ldots,P_m$  obtained by
removing $t$ from $P$.  The subtree rooted at each such leaf is
obtained recursively from the recursive triangulation of $P_i$.

\begin{figure}
 \begin{center}\includegraphics{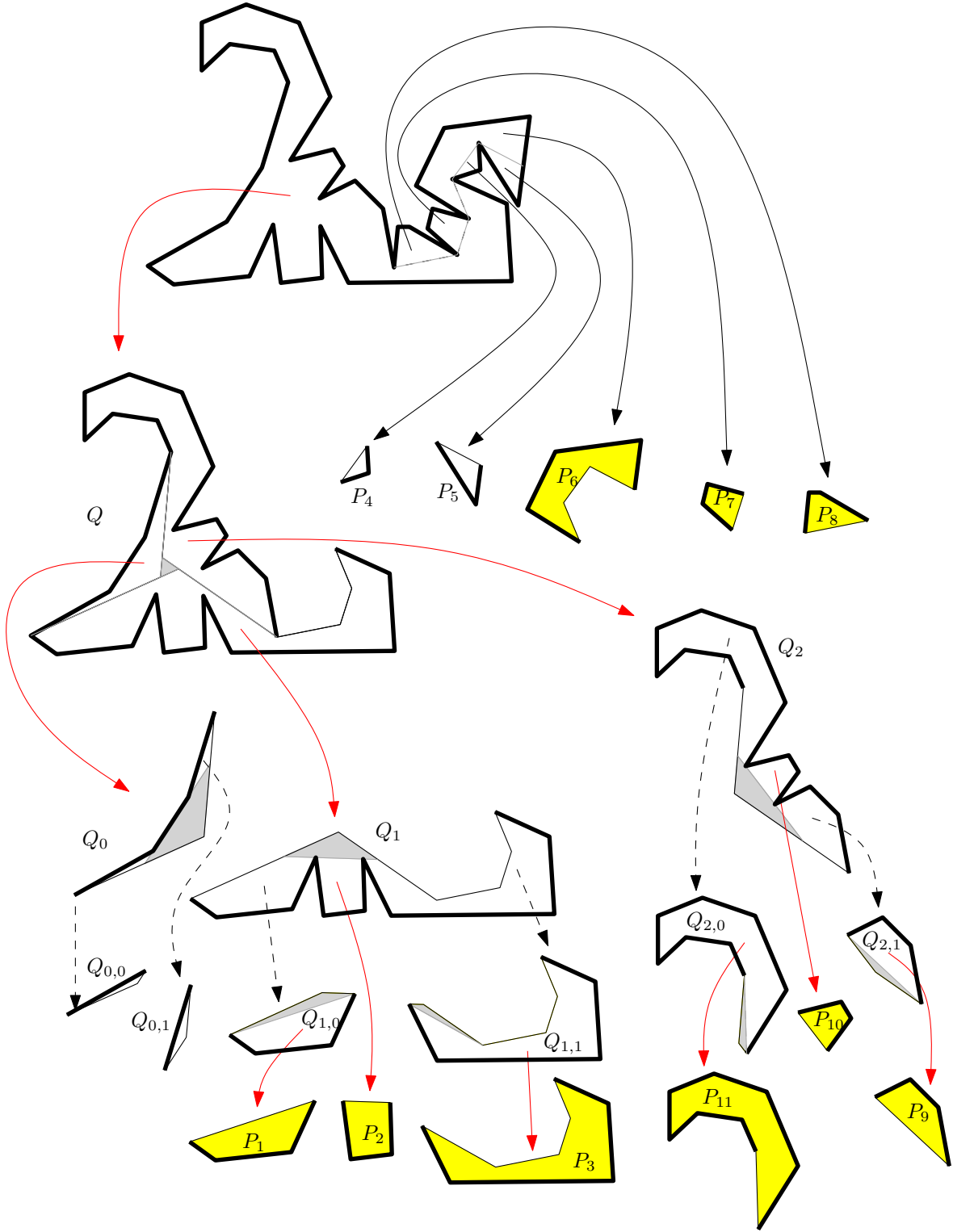}\end{center}
   \caption{The $\Delta$-tree $T$.  Yellow leaves in this tree are the
     root of subtrees obtained recursively.  Grey areas show the
portions of a node not covered by its children. 
A black edge from a node $x$ to a node $y$ indicates that $\Pr(y)\le
(1/2)\Pr(x)$.  A solid edge leading into a node $x$ indicates that $x$
is separated from the rest of $P$ by a shortest path in $P$.}
  \figlabel{delta-tree}
\end{figure}

Now that we have defined the tree $T$, we study some of its
properties.  Our first lemma says that $T$ does a good job of
splitting $P$ based on its probabilities.

\begin{lem}\lemlabel{t-halving}\lemlabel{A}
Let $P$ be a polygon, let $D$ be a probability measure over $\R^2$, 
and let $T=T(P,D)$ be the $\Delta$-tree for $(P,D)$.
Let $x$ be a node of $T$ whose depth is $i$. Then $\Pr(x) \le
(1/2^{\floor{i/4}})\Pr(P)$.
\end{lem}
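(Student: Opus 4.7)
The plan is to track probability decay along the root-to-$x$ path, exploiting the rigid type structure of the $\Delta$-tree. By construction: each P-node's children are zero or more P-subpolygons (for each of which the probability is at most $\tfrac{1}{2}$ that of the parent, by the choice of $p_k$) together with at most one T-node; a T-node's children are exactly three Q-nodes; and an internal Q-node's children comprise some Q-nodes (each at most $\tfrac{1}{2}$ the parent's probability, by the choice of $e_i$) together with at most one P-node, which is a subpolygon of the \emph{nearest P-node ancestor}.

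Let $z$ be the closest P-node ancestor of $x$ (with $z=x$ if $x$ is itself a P-node), set $d=\depth(z)$, and write $m=i-d$. Because $z$ is the last P-node on the path, the subpath from $z$ down to $x$ is forced by the type rules to take the form P-node $\to$ T-node $\to$ Q-node $\to$ Q-node $\to \cdots \to$ Q-node. Iterating the Q-to-Q halving along this subpath yields the direct estimate $\Pr(x)\le (1/2^{\max(0,m-2)})\Pr(z)$.

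The core step is the sub-claim, proved by induction on $d$: every P-node $z$ at depth $d$ satisfies $\Pr(z)\le (1/2^{\lceil d/4\rceil})\Pr(P)$. For the inductive step, let $z'$ be the previous P-node ancestor of $z$ and set $\delta=d-\depth(z')$. The type rules force $\delta\in\{1\}\cup\{3,4,5,\ldots\}$, since $\delta=2$ would require a P-node at depth $\depth(z')+2$, but that node is always a Q-node (being a child of the unique T-child of $z'$). Tracing the path from $z'$ to $z$ and combining the intermediate Q-to-Q halvings with the final subpolygon bound $\Pr(z)\le (1/2)\Pr(z')$ yields $\Pr(z)\le (1/2^h)\Pr(z')$, where $h=1$ if $\delta\in\{1,3,4\}$ and $h=\delta-3$ if $\delta\ge 5$. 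The decisive inequality $h\ge \delta/4$ holds in every admissible case (with equality at $\delta=4$), so ceiling arithmetic gives $\lceil d/4\rceil \le \lceil(d-\delta)/4\rceil + h$, closing the induction.

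Combining the two bounds produces $\Pr(x)\le (1/2^{\lceil d/4\rceil + \max(0,m-2)})\Pr(P)$, and a short case split on $m\in\{0,1,2\}$ versus $m\ge 3$ confirms $\lceil d/4\rceil + \max(0,m-2)\ge \lfloor(d+m)/4\rfloor = \lfloor i/4\rfloor$. The main obstacle I expect is pinpointing that $\delta=4$ (the tight pattern P $\to$ T $\to$ Q $\to$ Q $\to$ P) is exactly the worst case: one halving spread over four depth units is precisely what forces the exponent $\lfloor i/4\rfloor$ in the statement, and it is the pivot around which the ceiling/floor bookkeeping must be handled carefully.
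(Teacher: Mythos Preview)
Your proof is correct, and your decomposition is genuinely different from the paper's. The paper argues in fixed blocks: it takes $r'$ four levels below the root, shows $\Pr(r')\le \tfrac12\Pr(r)$ by a two-case analysis (either a second P-node appears, or the path is forced to be P--T--Q--Q and the final Q--Q edge halves), and then says to ``apply the same argument inductively on the path from $r'$ to $x$.'' You instead decompose the root-to-$x$ path at its P-nodes, prove the sharper sub-claim $\Pr(z)\le 2^{-\lceil d/4\rceil}\Pr(P)$ for every P-node $z$ at depth $d$ by analysing the gap $\delta$ between consecutive P-nodes, and then handle the trailing P--T--Q--$\cdots$--Q tail separately.

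What your approach buys is rigor at exactly the point where the paper is terse. The paper's inductive step treats $r'$ as a new root, but $r'$ need not be a P-node; in particular, if the first block ends at a T-node (path P--T--Q--P--T, which falls under the paper's Case~A), the ``same argument'' does not literally apply, since the case split hinged on the starting node being a P-node and on the subpolygon bound being measured against it. The four-step halving claim is in fact false when measured from a T-node (the pattern T--Q--P--T--Q need not halve relative to the T-node, only relative to its P-parent). Your P-node decomposition sidesteps this entirely, because every halving you invoke is anchored at a genuine P-node. The cost is more bookkeeping (the $\lceil\cdot\rceil$/$\lfloor\cdot\rfloor$ arithmetic and the case split on $\delta$), but the argument is self-contained. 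One small wording note: when you say ``combining the intermediate Q-to-Q halvings with the final subpolygon bound,'' you are really taking the \emph{better} of the two bounds (they cannot be multiplied), which is exactly what your formula $h=\max(1,\delta-3)$ reflects.
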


\begin{proof}
Assume $i\ge 4$, otherwise there is nothing to prove. Let $r'$ be the
fourth node on the path from the root, $r$, of $T$ to $x$.  Let $P_{r'}$ be
the path in $T$ from $r$ to $r'$.  If $P_{r'}$ has at least two
P-nodes, then $\Pr(r')\le (1/2)\Pr(r)=(1/2)\Pr(P)$. Otherwise, the
second node in $P_{r'}$ is a T-node followed by 2 Q-nodes.  By
construction, for any Q-node $y$ whose parent is a Q-node $z$,
$\Pr(y)\le(1/2)\Pr(z)$.  Therefore, $\Pr(r')\le (1/2)\Pr(\prnt(r'))\le
(1/2)\Pr(r) = (1/2)\Pr(P)$.  If $x=r'$, then $i=4$ and the proof is complete.
Otherwise, apply the same argument inductively on the path from $r'$ to $x$, to
obtain
\[
   \Pr(x) \le \Pr(r') \cdot 1/2^{\floor{(i-4)/4}} 
    =  \Pr(r')\cdot 1/2^{\floor{i/4}-1}  
   \le (1/2^{\floor{i/4}})\Pr(r) 
    = (1/2^{\floor{i/4}})\Pr(P)  \enspace ,
\]
and this completes the proof.
%Let $P_x$ be the path from the root, $r$, of $T$ to $x$.  Suppose, for
%now, that $P_x$ contains at least 2 P-nodes (one of which is the root
%$r$).  Consider the path $P'$ from $r$ to the second P-node $r'$ in
%$P_x$.  By construction, $\Pr(r') \le (1/2)\Pr(r)$, so if $P'$ is of
%length at most 4 then we can (inductively) apply the same argument to
%the path from $r'$ to $x$, so that 
%\[
%   \Pr(x) \le \Pr(r') \cdot 1/2^{\floor{(i-4)/4}} 
%    =  \Pr(r')\cdot 1/2^{\floor{i/4}-1}  
%   \le (1/2^{\floor{i/4}})\Pr(r) .
%    = (1/2^{\floor{i/4}})\Pr(P) .
%\]
%and we are done.
%
%Otherwise, the second node on the path $P'$ is a T-node, which is followed by
%$k\ge 1$ Q-nodes, followed finally by $r'$.  Note that, for any Q-node
%$x$ whose parent is a Q-node $y$, $\Pr(x)\le (1/2)\Pr(y)$.  Therefore,
%$\Pr(r') \le (1/2^{k-1})$.  Since the length of the $P'$ is $k+2 \ge 4$,
%and $k-1 \ge (k+2)/4$ for all $k\ge 2$,
%we get
%\[
%    \Pr(r') \le (1/2^{k-1})\Pr(r) \le (1/2^{(k+2)/4})\Pr(r) \enspace ,
%\]
%and we can then argue inductively on the path from $r'$ to $x$ (whose length
%is $i-k-2$).
%
%Finally, if $P_x$ contains only a single P-node then, arguing as
%above, we see that there is a sequence of at least $i-1$ consecutive
%$Q$ nodes in $P_x$, so 
%\[
%   \Pr(x)\le 1/2^{i-2}\Pr(r)\le 1/2^{i/4}\Pr(r) \le 1/2^{\floor{i/4}}\Pr(r)
%   = 1/2^{\floor{i/4}}\Pr(P)
%\]
%for all $i\ge 4$.
\end{proof}

Our next lemma says that a single line segment does not intersect
very many high probability triangles in $\Delta$.

\begin{lem}\lemlabel{t-segment-x}
Let $P$ be a polygon, let $D$ be a probability measure over $\R^2$, 
and let $T=T(P,D)$ be the $\Delta$-tree for $(P,D)$, let $s\subseteq P$ be
a line segment, and let $S_i\subseteq V(T)$ be the set of all vertices
$x\in V(T)$ that are distance at most $i$ from the root of $T$ and
such that
$\interior(x)$ intersect $s$.  Then $|S_i|\le 6i^2$
\end{lem}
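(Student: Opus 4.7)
The plan is to proceed by induction on $i$, tracking for each node $x \in V(T)$ the number $c(x)$ of connected components of $s\cap\interior(x)$. Since $s\subseteq P$ is connected, $c(P)=1$ at the root. The central combinatorial ingredient is the inequality
\[ \sum_{y\text{ child of }x} c(y) \;\le\; c(x) + b(x), \]
where $b(x)$ counts the transverse crossings of $s$ with the boundaries separating the children of $x$ within $\interior(x)$ (including crossings into any ``lost'' triangles $t'$ or $t_i'$ that are not themselves nodes of $T$). This holds because the $c(x)$ open intervals comprising $s\cap\interior(x)$ are cut by the $b(x)$ crossing points into at most $c(x)+b(x)$ subintervals in total, each of which lies in exactly one child (or in a lost region).

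Next I would bound $b(x)$ by a small constant depending on the node type, using the geometry of the construction in \secref{delta}. For a P-node $x$, the child separators form $\boundary\z t_x$, the union of three reflex chains of the pseudotriangle $\z t_x$; viewed from outside $\z t_x$, each reflex chain is a convex polygonal chain, and a single line segment crosses a convex chain in at most $2$ points, so $b(x)\le 6$. For a T-node, the child separators are the three edges of the central triangle $t'$, each a line segment, so $b(x)\le 3$. For a Q-node, they are the three edges of the triangle $t_i'$, so again $b(x)\le 3$. Two additional observations help sharpen the count: since $t'$ and $t_i'$ are convex triangles and $s$ is a line segment, $s$ meets each of them in at most one subinterval, so $s$ can pass through at most two of the (non-lost) children of any T- or Q-node; and the children of a P-node are pairwise disjoint and touch each other and the T-node sibling only at isolated vertices, so generically $s$ lies entirely inside exactly one child of any P-node.

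Letting $d_j$ be the number of depth-$j$ nodes whose interior meets $s$, the observations above give $d_0=d_1=1$ and bound the per-level growth of $d_j$ linearly in $j$. Summing $|S_i|=\sum_{j=0}^{i}d_j$ then yields the claimed quadratic bound. The hardest step is this last one: a naive unrolling of the combinatorial inequality together with $b(x)\le 6$ gives only $N_j := \sum_{x\text{ at depth }j} c(x) \le 7\,N_{j-1}$, which is exponential rather than quadratic. The correct argument must amortize, charging each new depth-$j$ node of $S_j$ to a specific geometric event along $s$ --- either a convex-chain crossing with a P-ancestor's pseudotriangle boundary, or a side crossing of a T/Q-ancestor's central triangle --- and argue that the total number of such events up to depth $j$ is $O(j)$, exploiting both the convex-chain nature of the reflex chains (bounded re-entries into P-descendants) and the convexity of $t'$ and $t_i'$ (at most two sibling traversals per pass). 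It is precisely this amortized accounting that tightens the bound from exponential to the stated $6i^2$.
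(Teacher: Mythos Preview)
Your framework of tracking component counts $c(x)$ and boundary crossings $b(x)$ is reasonable, but as you yourself note, it leads only to an exponential recursion, and the ``amortized accounting'' you invoke at the end is never carried out.  That missing step is the entire content of the lemma, and your description of it does not isolate the two geometric facts that actually do the work.

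The first fact concerns P-nodes.  You bound $b(x)\le 6$ for a P-node $x$ by observing that the three reflex chains of $\z t_x$ are convex from outside.  The paper uses something sharper: the chain $C_w$ separating any P-node $w$ from its parent is a \emph{shortest path} in $P$, and a line segment contained in $P$ can cross a shortest path at most once (otherwise the segment would shortcut the path).  Hence if neither endpoint of $s$ lies in $\interior(w)$, then $s$ does not meet $\interior(w)$ at all.  In the paper's language, every P-node intersected by $s$ is of Type~1 or Type~2 (it contains an endpoint of $s$); there are no Type~0 P-nodes.  Your statement that ``generically $s$ lies entirely inside exactly one child of any P-node'' is neither correct nor what is needed.

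The second fact concerns Q-nodes.  Two Q-node siblings $Q_{i,0}$ and $Q_{i,1}$ are separated inside their parent by the triangle $t_i'$, whose two relevant sides are \emph{collinear} (both lie on the extension of $e_i$); any segment passing from one sibling to the other through $t_i'$ would have to cross that line twice.  The only alternative is to leave the parent $Q_i$ across its bounding shortest path and re-enter, again impossible for a segment.  Thus two sibling Q-nodes are never mutually visible in $P$, so at most one Q-child of any Q-node meets $s$.

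With these two facts the count is direct, with no amortization.  At most $2i+1$ nodes within depth $i$ contain an endpoint of $s$.  Every other intersected node is a T- or Q-node, and such nodes form at most three descending chains below each T-node (one per top-level $Q_0,Q_1,Q_2$), each chain having a unique Q-child at every level.  Each chain has length at most $i$ and its T-node root is a child of a Type~1/2 P-node, of which there are at most $2i$.  This gives $|S_i|\le 2i\cdot 3i = 6i^2$.
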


\begin{proof}
There are 3 types of nodes in $T$ whose interiors intersect $s$:
Type~1~nodes contain one endpoint of $s$ in their interior, Type~2
nodes contain both endpoints of $s$ in their interior, and Type~0
nodes contain no endpoints of $s$ in their interior.  Notice that each
level of $T$ contains at most 1 Type~2 node or 2 Type~1 nodes, so the
total number of Type~1 and Type~2 nodes at distance at most $i$ from
the root is at most $2i+1$.  Thus, all
that remains is to bound the number of Type~0 nodes whose distance
from the root is at most $i$.

Let $w$ be any P-node such that $\interior(w)$ does not contain either
endpoint of $s$.  Since $w$ is a P-node, there is a reflex chain $C_w$
on the boundary of $w$ that is a shortest path between two points on
the boundary of $P$, and every path in $P$ from $\interior(w)$ to $s$
intersects $C_w$. Stated another way, the interior of a P-node does
not intersect $s$ unless it contains at least one endpoint of $s$.
Therefore, all Type~0 nodes are either T-nodes or Q-nodes.

For every Type~0 node $x$, there is a path $P_x$ in $T$ from $x$ to a
T-node that is adjacent to a Type~1 or Type~2 node.  Furthermore, the
path $P_x$ consists of $x$ followed by 0 or more Q-nodes, and
terminates with a T-node.  Looking more closely at the definition of
Q-nodes, we see that two sibling Q-nodes $x$ and $y$ are not mutually
visible, i.e., there is no line segment $s\subseteq P$ that intersects
both $\interior(x)$ and $\interior(y)$.

All of this implies that each of the at most $2i+1$ Type~1 or Type~2 
nodes is adjacent to at most 1 Type~0 T-node, and this T-node is the
endpoint of at most 3 paths of Type~0 Q-nodes. Each such path is of
length at most $i$.  Therefore, the total number of nodes in $T$ that
intersect $s$ is at most  $2i\cdot 3i=6i^2$.
\end{proof}

\subsection{Minimum-Entropy Triangulation}

Next, we show that the triangulation $\Delta$ defined above is
nearly-minimum entropy over all possible triangulations of $P$.  We do
this by developing a technique for lower-bounding the entropy of one
triangulation in terms of the entropy of another triangulation.  We
then show how to apply this technique to lower bound the entropy of
any triangulation $\Delta^*$ in terms of the entropy of $\Delta$.
  
To obtain lower bounds on the entropy of a triangulation $\Delta^*$,
consider the following easily proven observation: If each triangle in
$\Delta^*$ intersects at most $c$ triangles of some triangulation
$\Delta$ then $H(\Delta^*) \ge H(\Delta) - \log c$.\footnote{Proof:
Consider the set $X=\{ t^*\cap t : t^*\in\Delta^*, t\in \Delta\}$.
Each triangle of $\Delta^*$ contributes at most $c$ pieces to $X$, so
we have $H(\Delta) \le H(X) \le H(\Delta^*) + \log c$.}  This
observation allows us to use $\Delta$ to prove a lower bound on the
entropy of a triangulation $\Delta^*$.  Unfortunately, the condition
that each triangle of $\Delta^*$ intersect at most $c$ triangles of
$\Delta$ is too restrictive for our purposes.  Instead, we require
following stronger result:

\begin{lem}\lemlabel{pieces}
Let $D$ be a probability measure over $\R^2$.  Let $\Delta$ and
$\Delta^*$ be triangulations, and let $\{\Delta_1,\ldots,\Delta_m\}$
be a partition of the triangles in $\Delta$.  Suppose that, for all
$i\in\{1,\ldots,m\}$ and for each triangle $t^*\in\Delta^*$, $t^*$
intersects at most $c_i$ triangles in $\Delta_i$.  Then
\begin{eqnarray*}
   H(\Delta) \le 
	 H(\Delta^*) + H(\{\cup\Delta_1,\ldots,\cup\Delta_m\}) + 
	\sum_{i=1}^m \Pr(\cup\Delta_i)\log c_i
 \enspace . 
\end{eqnarray*}
\end{lem}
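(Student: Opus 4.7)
The plan is to generalize the footnoted observation by a chain-rule decomposition of $H(\Delta)$. Write $A_i=\cup\Delta_i$, so $\{A_1,\ldots,A_m\}$ is a partition of $\cup\Delta=\cup\Delta^*$. Splitting $\Pr(t)=\Pr(A_i)\Pr(t|A_i)$ for each $t\in\Delta_i$ and rearranging $\sum_t \Pr(t)\log(1/\Pr(t))$ yields the chain-rule identity
\[
  H(\Delta) \;=\; H(\{A_1,\ldots,A_m\}) \;+\; \sum_{i=1}^m \Pr(A_i)\,H(\Delta_i),
\]
in which the induced entropies $H(\Delta_i)$ are conditioned on $A_i$. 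What remains is to bound each $H(\Delta_i)$ using $c_i$ and $\Delta^*$, and then to reassemble the pieces into $H(\Delta^*)$.

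For the first task, introduce the common refinement $Y_i=\{t\cap t^* : t\in\Delta_i,\ t^*\in\Delta^*\}$ of $A_i$; it refines both $\Delta_i$ and the partition $\{t^*\cap A_i : t^*\in\Delta^*\}$ of $A_i$. Applying the chain rule with $\Delta_i$ as the coarser partition gives $H(Y_i)\ge H(\Delta_i)$, since the conditional terms $H(\{t\cap t^*: t^*\in\Delta^*\})$ that appear in the extra sum are non-negative. Applying it instead with $\{t^*\cap A_i\}$ as the coarser partition gives
\[
  H(Y_i) \;=\; H(\{t^*\cap A_i : t^*\in\Delta^*\}) \;+\; \sum_{t^*\in\Delta^*}\Pr(t^*\cap A_i|A_i)\,H(\{t\cap t^* : t\in\Delta_i\}).
\]
The hypothesis bounds each inner family by $c_i$ nonempty members, so its entropy is at most $\log c_i$. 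Combining yields the per-piece analogue of the footnote, $H(\Delta_i)\le H(\{t^*\cap A_i : t^*\in\Delta^*\})+\log c_i$.

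Substituting back into the chain-rule identity gives
\[
  H(\Delta) \;\le\; H(\{A_i\}) \;+\; \sum_{i=1}^m \Pr(A_i)\,H(\{t^*\cap A_i : t^*\in\Delta^*\}) \;+\; \sum_{i=1}^m \Pr(A_i)\log c_i.
\]
The finishing step is to notice that the first two terms reassemble, via another application of the chain rule, into the joint entropy $H(\{A_i\},\Delta^*)=H(\{A_i\})+\sum_i \Pr(A_i)\,H(\{t^*\cap A_i : t^*\in\Delta^*\})$. The subadditivity inequality $H(\{A_i\},\Delta^*)\le H(\{A_i\})+H(\Delta^*)$ quoted at the end of \secref{prelim} then replaces the middle term by $H(\Delta^*)$, giving exactly the bound claimed in the lemma.

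The only real obstacle is bookkeeping: one must track which induced entropies are conditioned on $A_i$ versus on $\cup\Delta$, and recognise at the end that the first two terms of the intermediate bound collapse to the joint entropy $H(\{A_i\},\Delta^*)$ so that Gray's subadditivity inequality can be invoked. Once this identification is made, the inequalities chain together cleanly.
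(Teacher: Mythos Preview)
Your proof is correct and uses the same ingredients as the paper's proof---the chain rule for entropy, the $\log c_i$ bound from refining into at most $c_i$ pieces, and the subadditivity inequality $H(\{A_i\},\Delta^*)\le H(\{A_i\})+H(\Delta^*)$. The only difference is orientation: the paper starts from $H(\Delta^*)+H(\{A_i\})$, applies subadditivity first, and works downward to $H(\Delta)$, whereas you start from $H(\Delta)$, decompose via the chain rule, and invoke subadditivity last; these are the same argument run in opposite directions.
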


Intuitively, \lemref{pieces} can be thought of as follows:  If we tell
an observer which of the $\Delta_i$ a point $p$ drawn according to $D$
occurs in then the amount of information we are giving the observer
about the experiment is at most
$H(\{\cup\Delta_1,\ldots,\cup\Delta_m\})$.  However, after giving away
this information, we are able to apply the simple observation in the
previous paragraph, since each triangle in $\Delta^*$ intersects at
most $c_i$ elements of each $\Delta_i$.  Thus, \lemref{pieces} is
really just $m$ applications of the simple observation.  The following
proof formalizes this:

\begin{proof}
\begin{eqnarray*}
   H(\Delta^*)  +H(\{\cup\Delta_1,\ldots,\cup\Delta_m\})
     & \ge & H(\Delta^*,\{\cup\Delta_1,\ldots,\cup\Delta_m\}) \\
     &  =  & \sum_{i=1}^m
              \sum_{t^*\in\Delta^*}
               \Pr(t^*\cap\Delta_i)\log (1/\Pr(t^*\cap\Delta_i) \\
     & \ge & \sum_{i=1}^m
              \sum_{t\in\Delta_i}
               \sum_{t^*\in\Delta^*}
                \left(
                  \Pr(t^*\cap t)\log (1/\Pr(t^*\cap t) - \log c_i
                \right) \\
     & \ge  & \sum_{i=1}^m 
               \sum_{t\in\Delta_i}
                \Pr(t)\log(1/\Pr(t)) 
                -\sum_{i=1}^m \Pr(\cup\Delta_i)\log c_i  \\
     &  =  & H(\Delta) -\sum_{i=1}^m \Pr(\cup\Delta_i)\log c_i
            \enspace ,
\end{eqnarray*}
and this completes the proof.
\end{proof}

The remainder of our argument involves partitioning the triangles of
$\Delta$ into subsets $\Delta_1,\ldots,\Delta_m$ and then showing that
$H(\Delta_1,\ldots,\Delta_m)$ and $\sum_{i=1}^m \Pr(\cup\Delta_i)\log
c_i$ are not too big.  To help us, we will use the $\Delta$-tree $T$.
For a node $x$ in $T$ with children $x_1,\ldots,x_k$, let $t(x) = x
\setminus (\bigcup_{i=1}^m x_i)$ be the portion of $x$ not covered by
$x$'s children.  Note that $t(x)$ is always either the empty set or is
a triangle in $\Delta$ (see \figref{delta-tree}).  In fact, for every
triangle $t\in\Delta$, there is exactly one $x\in V(T)$ such that
$t(x)=t$, and for every $x\in V(T)$ such that $t(x)$ is non-empty
there is exactly one $t\in\Delta$ such that $t(x)=t$.  This implies
that\footnote{Here, and throughout the remainder, we slightly abuse
notation by using the convention that $0\cdot\log(1/0)=0$.}
\[
    H(\Delta) = \sum_{t\in\Delta}\Pr(t)\log(1/\Pr(t)) =
       \sum_{x\in V(T)}\Pr(t(x))\log(1/\Pr(t(x))) \enspace .
\]
For a node $x\in V(T)$, we define $\Pr(x)=\Pr(t(x))$ is the
probability that a point drawn from $D$ is contained in $t(x)$.

Next we apply \lemref{pieces} to obtain a lower bound on the
entropy of any triangulation $\Delta^*$. 

\begin{lem}\lemlabel{min-H-triangulation}\lemlabel{Z}
Let $P$ be a simple polygon, let $D$ be a probability measure over
$\R^2$, and consider the triangulation $\Delta=\Delta(P,D)$.
Then, for any triangulation $\Delta^*$ of $P$,
\[
    H(\Delta) \le H(\Delta^*) + O(H(\Delta^*)^{2/3}+1) \enspace .
\]
\end{lem}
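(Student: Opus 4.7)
The plan is to apply \lemref{pieces} with a partition of the triangles of $\Delta$ induced by depth in the $\Delta$-tree $T = T(P,D)$. Throughout, I use the standing assumption that $\Pr(P)=1$. For $j\ge 0$ define $\Delta_j = \{t(x) : x\in V(T),\ t(x)\ne\emptyset,\ \depth(x)\in[4j,4(j+1))\}$ and write $p_j = \Pr(\cup\Delta_j)$; note $\sum_j p_j = 1$. Let $c_j$ be the maximum, over $t^*\in\Delta^*$, of the number of triangles in $\Delta_j$ whose interiors meet $t^*$. By \lemref{pieces},
\[
   H(\Delta)\le H(\Delta^*) + H(\{\cup\Delta_j\}_j) + \sum_j p_j \log c_j,
\]
so the entire proof reduces to showing that both correction terms are $O(\log(H(\Delta)+1))$; a standard self-referential argument then yields $H(\Delta)\le H(\Delta^*)+O(\log(H(\Delta^*)+1)+1)$, which is majorized by $O(H(\Delta^*)^{2/3}+1)$.

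First I would bound $c_j$. If $t^*\cap\interior(t(x))\ne\emptyset$ then $t^*\cap\interior(x)\ne\emptyset$, and either $t^*\subseteq x$ or some edge of $t^*$ crosses $\interior(x)$. Nodes of the first kind at depth less than $4(j+1)$ form an ancestor chain in $T$ and contribute at most $4(j+1)$ nodes. For the second kind, \lemref{t-segment-x} applied separately to each of the three edges of $t^*$ yields at most $18\cdot(4(j+1))^2$ nodes. Hence $c_j=O((j+1)^2)$. To translate $\sum_j p_j\log c_j$ into a function of $H(\Delta)$, use \lemref{A}: since $\Pr(x)\le 1/2^{\lfloor\depth(x)/4\rfloor}$ and $\Pr(t(x))\le\Pr(x)$, we have $\depth(x)\le 4\log(1/\Pr(t(x)))+4$, and averaging gives $\sum_x\Pr(t(x))\depth(x)\le 4H(\Delta)+4$. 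Therefore $\sum_j j\,p_j \le H(\Delta)+1$, and Jensen's inequality gives
\[
   \sum_j p_j\log c_j \le 2\sum_j p_j\log(j+1)+O(1) \le 2\log(H(\Delta)+2)+O(1).
\]

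The more delicate step is bounding $H(\{\cup\Delta_j\}_j)$, because the number of non-empty classes can be arbitrarily large (the depth of $T$ depends on the smallest probability in $D$). I would avoid this via the non-negativity of Kullback--Leibler divergence against the reference distribution $q_j = 6/(\pi^2(j+1)^2)$:
\[
   H(\{p_j\}) \le \sum_j p_j\log(1/q_j) = 2\sum_j p_j\log(j+1)+\log(\pi^2/6),
\]
which by the same Jensen bound is $O(\log(H(\Delta)+1))$. Combining yields $H(\Delta)\le H(\Delta^*)+O(\log(H(\Delta)+1))$. Solving this self-referential inequality (case split on whether $H(\Delta)\le 2H(\Delta^*)+O(1)$ or not, using monotonicity of $x-c\log(x+1)$ for large $x$) gives the claim. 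The principal obstacle is the bound on $H(\{\cup\Delta_j\}_j)$: naive arguments produce $\log m$ where $m$ is the number of classes, which can be far too large, whereas the KL step correctly exploits the Markov-style decay of $p_j$ implied by \lemref{A}.
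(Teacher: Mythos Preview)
Your approach differs from the paper's in two substantive ways, and both are worth highlighting.

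First, you partition the triangles of $\Delta$ by \emph{depth} in $T$, whereas the paper partitions by \emph{probability}: it sets $G_i=\{x\in V(T):1/2^i<\Pr(t(x))\le 1/2^{i-1}\}$ and then invokes \lemref{A} to bound depths by $4i$.  Both partitions lead to the same $c_j=O(j^2)$ crossing bound via \lemref{t-segment-x}, so this change is mostly cosmetic.

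Second, and more interestingly, you bound the partition entropy $H(\{p_j\})$ via Gibbs' inequality against the reference distribution $q_j\propto (j+1)^{-2}$, combined with the first-moment bound $\sum_j j\,p_j\le H(\Delta)+1$ coming from \lemref{A}.  The paper instead splits the groups into ``large'' ($|G_i|\ge 2^{\alpha i}$) and ``small'' ones and optimises over $\alpha$, which only yields $O(H(\Delta)^{2/3})$.  Your argument is cleaner and, as you note, actually gives the stronger additive error $O(\log(H(\Delta)+1))$, which you then throw away to match the stated lemma.  The same Gibbs trick would work in the paper's probability-based partition as well, since $\sum_i i\,\Pr(\cup G_i)\le H(\Delta)+1$ holds there too.

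One point deserves care.  Your dichotomy ``either $t^*\subseteq x$ or some edge of $t^*$ crosses $\interior(x)$'' is incomplete: when $x\subsetneq t^*$ one has $\interior(x)\subseteq\interior(t^*)$, so no edge of $t^*$ meets $\interior(x)$ and $t^*\not\subseteq x$.  Such nodes are not covered by the ancestor-chain count nor by \lemref{t-segment-x}.  The paper's proof glosses over precisely the same case when it asserts the $288i^2$ bound by applying \lemref{t-segment-x} to the three edges of $t^*$, so you are in good company; but since you spell the dichotomy out, the gap is visible.  To close it, one can argue that a P-node $x$ with $x\subseteq t^*$ must have its reflex chain equal to a straight segment (being a shortest path inside the convex set $t^*$) and the rest of $\partial x$ on $\partial P\cap\partial t^*$; non-degeneracy of $\interior(x)$ then forces $\partial x$ to contain a vertex of $t^*$, so at each depth at most three such P-nodes occur, and the Q- and T-node counts follow as in the proof of \lemref{t-segment-x}.
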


\begin{proof}
Let $T=T(P,D)$ be the $\Delta$-tree for $(P,D)$.
Partition the nodes of $T$ into
\emph{groups} $G_1,G_2,\ldots$ where
\[
	G_i = \{x\in V(T) : 1/2^{i} < \Pr(x) \le 1/2^{i-1} \} \enspace .
\]
In the following we will fix a value $\alpha$, $0 < \alpha < 1$, to be
defined later.  A group $G_i$ is \emph{large} if it contains at least
$2^{\alpha i}$ elements, otherwise $G_i$ is \emph{small}.  Let $I^+$
denote the index set of the large groups, i.e., $I^{+}=\{i\in\N :
|G_i| \ge 2^{\alpha i}\}$.  Let $I^{-}= \N\setminus I^+$ be the index
set of the small groups.

Note that, for any group $G_i$, \lemref{t-halving} ensures that all
elements of $G_i$ have depth at most $4i$ in $T$.  Therefore, 
\lemref{t-segment-x} ensures that any triangle of
$\Delta^*$ intersects at most $3\times 6\times (4i)^2=288i^2$ 
triangles of $G_{i}$.  Therefore,
applying \lemref{pieces} with $c_{i}=288i^2$, we obtain:
\begin{equation} 
 H(\Delta) \le 
   H(\Delta^*) + H(\{{\cup G_{i}} : i\in\N\}) 
   + \sum_{i=1}^{\infty}\Pr(\cup G_{i})\log(288i^2)  \enspace .
   \eqlabelx{doit}
\end{equation}
Thus, all that remains is to bound the contribution of the last two
terms on the right hand side of \eqrefx{doit}.  First,
\begin{eqnarray*}
   \sum_{i=1}^{\infty}\Pr(\cup G_{i})\log(288i^2)
   &   =  & \sum_{i=1}^\infty\sum_{t\in G_i}\Pr(t)\log(288i^2) \\
   &  \le  & \sum_{i=1}^\infty\sum_{t\in G_i}\Pr(t)(O(1)+\log\log(1/\Pr(t))) \\
    &  =  & \sum_{t\in\Delta} \Pr(t) (O(1)+\log\log(1/\Pr(t))) \\
    &  =  & O(1+\log H(\Delta)) \enspace ,
\end{eqnarray*}
where the last equality follows from Jensen's Inequality.
Finally, we show that the contribution of $\overline
H=H(\{\cup\{G_{i}\} : i\in\N\})$ is at most
$O(H(\Delta)^{2/3})$. 
\begin{eqnarray*}
\overline H 
 &  =  & H(\{\cup\{G_{i}\} : i\in\N\}) \\
 &  =  & \sum_{i=1}^\infty\Pr(\cup G_{i})\log(1/\Pr(\cup G_{i})) \\
 &  =  & \sum_{i\in I^+}\Pr(\cup G_{i})\log(1/\Pr(\cup G_{i})) 
         + \sum_{i\in I^-}\Pr(\cup G_{i})\log(1/\Pr(\cup G_{i})) \\
 & \le & \sum_{i\in I^+}\Pr(\cup G_{i})\log(1/\Pr(\cup G_{i})) 
         + \sum_{i\in I^-}2^{\alpha i}/2^{i-1}\log(2^{i}) \\
 & \le & \sum_{i\in I^+}\Pr(\cup G_{i})\log(1/\Pr(\cup G_{i})) 
         + \sum_{i=1}^\infty 2^{\alpha i}/2^{i-1}\log(2^{i}) \\
 &  =  & \sum_{i\in I^+}\Pr(\cup G_{i})\log(1/\Pr(\cup G_{i})) 
         + \sum_{i=1}^\infty i2^{\alpha i}/2^{i-1} \\
 &  =  & \sum_{i\in I^+}\Pr(\cup G_{i})\log(1/\Pr(\cup G_{i})) 
         + 2\cdot\sum_{i=1}^{\infty} i/2^{(1-\alpha)i} \\
 &  =  & \sum_{i\in I^+}\Pr(\cup G_{i})\log(1/\Pr(\cup G_{i})) 
         + 2\cdot\left(\frac{(1/2)^{1-\alpha}}{(1-(1/2)^{1-\alpha})^2}\right) \\
 & \le & \sum_{i\in I^+}\Pr(\cup G_{i})\log(1/\Pr(\cup G_{i})) 
         + O(1/(1-\alpha)^2) \enspace ,
\end{eqnarray*}
where the last equality is obtained using the Taylor series expansion
for $e^{x}$ to obtain the inequality $1-1/2^{x} \ge x\ln 2 - (x^2\ln 2)/2$
for $x$ close to 0.  Continuing, we get 
\begin{eqnarray*}
\overline H
  & \le & \sum_{i\in I^+}\Pr(\cup G_{i})\log(1/\Pr(\cup G_{i})) 
         + O(1/(1-\alpha)^2) \\
  & \le & \sum_{i\in I^+}\Pr(\cup G_{i})\log(2^i/|G_i|)
         + O(1/(1-\alpha)^2) \\
  & \le & \sum_{i\in I^+}\Pr(\cup G_{i})\log(2^i/2^{\alpha i})
         + O(1/(1-\alpha)^2) \\
  &  =  & \sum_{i\in I^+}\Pr(\cup G_{i})(1-\alpha) i
         + O(1/(1-\alpha)^2) \\
  &  =  & (1-\alpha)\sum_{i\in I^+}\Pr(\cup G_{i}) i
         + O(1/(1-\alpha)^2) \\
  &  =  & (1-\alpha)\sum_{i\in I^+}\Pr(\cup G_{i})\log(2^i)
         + O(1/(1-\alpha)^2) \\
  &  =  & (1-\alpha)\sum_{i\in I^+}\sum_{t\in G_i}\Pr(t)\log(2^i)
         + O(1/(1-\alpha)^2) \\
  &  =  & (1-\alpha)\sum_{i\in I^+}\sum_{t\in G_i}\Pr(t)\log(1/\Pr(t))
         + O(1+1/(1-\alpha)^2) \\
  & \le & (1-\alpha)H(\Delta) + O(1+1/(1-\alpha)^2) \\
  & \le &  O(H(\Delta)^{2/3}+ 1)
\end{eqnarray*} 
Where the last inequality is obtained by setting 
$\alpha=1-1/H(\Delta)^{1/3}$.  Thus, we have shown that
\begin{equation}
  H(\Delta) \le H(\Delta^*) + O(H(\Delta)^{2/3}+1) \enspace ,
   \eqlabelx{almost-last}
\end{equation}
which implies that $H(\Delta) = O(H(\Delta)^*+1)$.  Applying this to
the right hand side of \eqrefx{almost-last} yields $H(\Delta) \le
H(\Delta^*) + O(H(\Delta^*)^{2/3} + 1)$, completing the proof.
\end{proof}

\lemref{min-H-triangulation} shows that the triangulation
$\Delta=\Delta(P,D)$ defined previously is nearly minimum-entropy over
all triangulations of $P$.  The following theorem gives an algorithmic
version of \lemref{min-H-triangulation}.

\begin{thm}\thmlabel{min-H-triangulation}
Let $P$ be a simple polygon with $n$ vertices, and let $D$ be a
probability measure over $\R^2$.  Then there exists an $O(n\log n)$
time algorithm that computes a triangulation $\Delta'$ of $P$ having
$O(n)$ triangles and such that, for any triangulation $\Delta^*$ of
$P$,
\[
    H(\Delta') \le H(\Delta^*) + O(H(\Delta^*)^{2/3}+1) \enspace .
\]
\end{thm}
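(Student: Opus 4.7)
The plan is to implement the recursive construction of $\Delta=\Delta(P,D)$ from \secref{delta} as an actual algorithm, and then invoke \lemref{min-H-triangulation} to obtain the entropy bound. Thus $\Delta'=\Delta$, and the content of the theorem reduces to showing that the construction can be performed in $O(n\log n)$ time and yields $O(n)$ triangles.

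First I would preprocess $P$ by computing a triangulation in $O(n)$ time and equipping it with a shortest-path data structure in the style of Guibas--Hershberger, so that the shortest path between any two points on the boundary of the current subpolygon, and hence a geodesic triangle $\triangle p_i p_j p_k$, can be reported in time $O(k+\log n)$ where $k$ is the output size. At each P-node the balancing point $p_k$ is found by binary searching along the boundary of the current polygon, using the assumed $O(1)$ probability oracle at each probe; the continuity argument already cited ensures that the search succeeds in $O(\log n)$ time. At each internal Q-node the balancing edge $e_i$ is found by an analogous binary search over the edges of the reflex chain of the current 2-convex pseudotriangle.

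Next I would bound $|\Delta|$. Every triangle in $\Delta$ corresponds to a node $x\in V(T)$ with $t(x)\ne\emptyset$, so it suffices to show $|V(T)|=O(n)$. The geodesic triangles introduced at P-nodes have their three convex vertices at vertices of $P$ and their three reflex chains built from edges of $\boundary P$, so they add no new vertices; the Q-node recursion, which splits a pseudotriangle by extending one edge $e_i$ of its reflex chain, adds at most two Steiner points per Q-node. Charging each Q-node to the reflex-chain edge $e_i$ that it destroys, and tracing that edge back to the edge of $\boundary P$ from which it originates, yields $|V(T)|=O(n)$, and hence $|\Delta|=O(n)$.

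The main obstacle is the running-time analysis. Given $|V(T)|=O(n)$, the $O(\log n)$ per-node cost of the binary searches contributes $O(n\log n)$ in total. One must also account for the cost of extracting each geodesic triangle and for maintaining the shortest-path data structure as subpolygons are cut off. The total combinatorial size of all geodesic triangles extracted across the entire recursion is $O(n)$ by the same charging argument, so the geodesic extraction cost is at most $O(n)$ plus $O(\log n)$ per extraction, which fits inside $O(n\log n)$. For the shortest-path structure, I would avoid rebuilding from scratch by running the whole recursion on the single precomputed structure of $P$, restricting queries to the portion of the boundary that currently bounds the active subpolygon; alternatively, a linkable variant of Guibas--Hershberger permits rebuilding on each child in time linear in its size. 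With these ingredients the total running time is $O(n\log n)$, and \lemref{min-H-triangulation} applied to $\Delta'=\Delta$ gives the claimed inequality, completing the proof.
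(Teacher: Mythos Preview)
Your strategy differs from the paper's, and the difference is exactly where the gap lies. The paper does \emph{not} compute $\Delta$ directly; it explicitly remarks that the naive recursion ``may have running time $\Omega(n^2)$ since there is no bound significantly smaller than $n$ on the size of an individual subproblem $P_i$.'' Its remedy is structural: at each P-node it spends $O(n)$ time (shortest-path trees from $p_i$ and $p_j$ to locate $p_k$, then Mehlhorn's two-sided exponential search to triangulate the 2-convex pseudotriangles), and whenever a child $P_i$ has more than $n/2$ vertices it inserts an \emph{extra} geodesic triangle chosen to balance by \emph{size}, not probability. This forces recursion depth $O(\log n)$ with $O(n)$ work per level. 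The output is a triangulation $\Delta'\neq\Delta$, and the paper then checks that Lemmas~\ref{lem:A}--\ref{lem:Z} still go through for $\Delta'$ (with $2^{\lfloor i/8\rfloor}$ in place of $2^{\lfloor i/4\rfloor}$ in \lemref{t-halving}).

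You instead try to compute $\Delta$ itself by making each node cost $O(\log n)$. The concrete failure is the sentence ``using the assumed $O(1)$ probability oracle at each probe.'' The oracle in the paper returns $\Pr(t)$ for a \emph{triangle} $t$; a probe of your binary search for $p_k$ must evaluate $\Pr(P_\ell)$ for a subpolygon $P_\ell$ cut off by a geodesic path, which can have $\Theta(n)$ edges, so a single probe may cost $\Theta(n)$ oracle calls. The same problem hits the Q-node search: the regions $Q_{i,0},Q_{i,1}$ to be balanced contain whole subpolygons $P_j$ and are not triangles. Separately, the continuity argument you cite gives existence of $p_k$, not a monotone predicate on which binary search is sound; with a geodesic triangle there are many pieces $P_1,\ldots,P_m$ to keep below $1/2$, not two. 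Without spelling out additional preprocessing (e.g., prefix sums of triangle probabilities over a fixed triangulation, together with an argument that geodesic cuts respect that structure at every recursion level), the $O(n\log n)$ bound is unsupported, and the paper's size-balancing modification is precisely what fills this hole.
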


\begin{proof}
We show how the construction of the triangulation $\Delta$ described in
\secref{delta} can be modified to run in $O(n\log n)$ time.  When
constructing $\Delta$ the first step is to find the third vertex $p_k$ of the
geodesic triangle $t=\triangle p_i p_j p_k$.  This can be accomplished
in $O(n)$ time by computing the shortest path trees from $p_i$ and
$p_j$ to all other vertices of $P$ and using these to find $p_k$.  For
an example of a similar computation, see Bose \etal\
\cite[Section~2.2]{bdhlim07}.

Next, $\z t$ is split into three 2-convex pseudotriangles $t_0,t_1,t_2$,
which is easily accomplished in $O(n)$ time.  The last step, before
recursing, is to triangulate each of $t_0,t_1,t_2$.  This step can be
accomplished in $O(n)$ time using a 2-sided exponential searching
trick that was used by Mehlhorn \cite{m75} in the construction of
biased binary search trees (see also, Collette \etal\
\cite[Theorem~1]{cdilm08}).

Finally, the algorithm recurses on each of the pieces
$P_1,\ldots,P_m$.  In this way, we obtain a divide-and-conquer
algorithm for constructing $\Delta$.  Unfortunately, this algorithm
may have running time $\Omega(n^2)$ since there is no bound
significantly smaller than $n$ on the size of an individual subproblem
$P_i$.  To overcome this, before recursing on a subproblem $P_i$ we
check if it contains more than $n/2$ vertices.  If so, then rather
than recursing normally on $P_i$ we choose a geodesic triangle $t^*$,
one of whose sides is the reflex chain $C_i$ and such that removing
$t^*$ from $P_i$ leaves a set of subpolygons
$P_{i,1},\ldots,P_{i,m_i}$ each with at most $n/2$ vertices.  This
modification then yields an algorithm whose recursion tree has depth
$O(\log n)$ and at which the work done at each level is $O(n)$, so the
total running time of this algorithm is $O(n\log n)$.

Note that this algorithm yields a triangulation $\Delta'$ that is
different from $\Delta$.  In particular, there may exist one $P_{i,j}$
with $\Pr(P_{i,j})>\Pr(P_i)/2$.  Despite this, all the proofs of
Lemmas~\ref{lem:A}--\ref{lem:Z} continue to hold almost without modification.
The only difference occurs in \lemref{t-halving}, which now only
guarantees a bound of $2^\floor{i/8}$ on the number of black edges,
but this has almost no effect on subsequent computations.

Finally, to see that $\Delta'$ contains $O(n)$
triangles, we count the different types of edges used in the
triangulation $\Delta'$.  Some of these edges are edges of $P$, of
which there are at most $n$.  Some of these edges are edges of
geodesic triangles,
which always connect two vertices of $P$ and do not cross each other,
so there are at most $n-2$ of these.  The remaining edges are used to
triangulate the interiors of pseudotriangles.  A pseudotriangle that has $k$ vertices
is triangulated using $3 + 2(k-3)$ edges.  Since the total number of
vertices in all pseudotriangles is at most $2n$, this means that there are at
most $6n$ edges used to triangulate pseudotriangles.  Therefore, the total
number of edges used by triangles in $\Delta'$, and hence the number
of triangles in $\Delta'$, is $O(n)$.  
\end{proof}

%%%%%%%%%%%%%%%%%%%%%%%%%%%%%%%%%%%%%%%%%%%%%%%%%%%%%%%%%%%%%%%%%%%
\section{Point Location in Simple Planar Subdivisions}
\seclabel{subdivisions}

Next we consider the problem of point location in simple
subdivisions.  The following theorem of Arya~\etal~\cite{ammw07}
shows that a low entropy triangulation can be used to make a good
point location structure.

\begin{thm}[Arya \etal\ 2007]\thmlabel{ammw07}
Let $D$ be a probability measure over $\R^2$ and let $\Delta$ be a
triangulation of $\R^2$ having a total of $n$ triangles.  Then there exists a
data structure of size $O(n)$ that can be constructed in $O(n\log n)$
time, and for which the expected number of point/line comparisons
required to locate the face of $G$ containing a query point $p$, drawn
according to $D$, is $H(\Delta) + O(H(\Delta)^{1/2}+1)$.
\end{thm}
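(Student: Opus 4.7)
The plan is to construct a hierarchical point-location structure adapting Kirkpatrick's classical hierarchy to be sensitive to the query distribution $D$. I would build a sequence of triangulations $\Delta = \Delta_0, \Delta_1, \ldots, \Delta_k$, where $\Delta_k$ has constant size and each $\Delta_{j+1}$ is obtained from $\Delta_j$ by deleting an independent set of low-degree vertices and retriangulating the resulting holes. For a query point $q$, the search locates $q$ in $\Delta_k$ directly and then descends level by level: knowing the triangle of $\Delta_{j+1}$ containing $q$, it tests each of the $O(1)$ triangles of $\Delta_j$ inside that triangle until it finds the right one, spending a bounded number of point/line comparisons per level. Linearity of total size and the $O(n\log n)$ preprocessing bound follow from a standard Kirkpatrick-style argument provided a constant fraction of the (weighted) vertices can be removed at each level.

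To make this entropy-sensitive, the choice of independent set at each stage must be biased so that a triangle $t \in \Delta$ of large probability survives intact through many levels of coarsening before being merged with its neighbors. Equivalently, the independent set should favor vertices incident only to triangles of small probability, so that high-probability triangles propagate quickly to the top of the hierarchy. A first pass would bucket the triangles of $\Delta$ by rounded log-probability and maintain a weighted independent-set invariant that guarantees, for each triangle $t\in\Delta$, that its surviving copy first appears in a level $j(t)$ with expected value roughly $\log(1/\Pr(t))$. A telescoping argument over the descent path then gives an expected query cost of $c\cdot H(\Delta) + O(1)$ for some constant $c$. To push the leading constant down to $1$ one has to account more tightly: when the descent crosses between two successive probability buckets, the amortized work should essentially equal the one bit of information gained, so the per-level slack must be charged against something other than $H(\Delta)$.

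The hard part is obtaining the tight additive term $O(H(\Delta)^{1/2}+1)$. Here I would split the descent into two phases: an upper, coarse phase traversing $\Theta(H(\Delta)^{1/2})$ levels whose only job is to isolate a sub-triangulation carrying most of the probability mass, and a lower, fine phase whose analysis proceeds with a leading constant of exactly $1$. The $\sqrt{H}$ additive term would then arise from the trade-off between the per-level slack absorbed by the upper phase and the sharper constant obtained in the lower phase, with the balance coming from Jensen's inequality applied to the concave function $x\mapsto\sqrt{x}$ on the expected contribution of each triangle. The main obstacle I anticipate is this final bookkeeping: one needs an independent-set construction whose quantitative guarantees simultaneously yield linear size, $c=1$ in the leading term, and slack that aggregates to only $O(\sqrt{H})$ rather than $O(\log H)$, and verifying that such a construction exists (and can be computed in $O(n\log n)$ time) is where the real work of the theorem lies.
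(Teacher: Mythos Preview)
This theorem is not proved in the present paper---it is quoted as a result of Arya, Malamatos, Mount, and Wong (2007) and used as a black box in \Secref{subdivisions}. There is no proof here to compare your proposal against.

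On the merits of your approach: a distribution-sensitive Kirkpatrick hierarchy is precisely the idea behind Iacono's $O(H+1)$ result, which the paper already cites in the introduction. The serious gap is in pushing the leading constant to $1$. In any Kirkpatrick-style descent, advancing one level requires determining which of the $O(1)$ child triangles of $\Delta_j$ contains $q$, and that costs strictly more than one point-line comparison per level. Since a single level of descent reveals only a bounded number of bits about the final triangle, the ratio of comparisons made to information gained is bounded away from $1$ at \emph{every} level, not merely at bucket boundaries. Your two-phase scheme does not escape this: the ``fine'' phase is still a Kirkpatrick descent with the same constant-factor per-level overhead, so the accumulated slack is $\Theta(H)$, not $O(\sqrt{H})$, regardless of how cleverly the independent sets are chosen. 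To achieve leading constant exactly $1$, the search structure must be a genuine linear decision tree in which each internal node performs a single point-line comparison---and a Kirkpatrick hierarchy is not such a tree. The Arya \etal\ construction builds a true linear decision tree directly, arranging that each triangle $t\in\Delta$ lies at depth $\log(1/\Pr(t)) + O(\sqrt{\log(1/\Pr(t))}+1)$; the $O(\sqrt{H})$ additive term then follows from Jensen's inequality applied to the concave function $x\mapsto\sqrt{x}$.
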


The following lemma shows that the entropy of a minimum-entropy
triangulation gives a lower bound on the cost of any point location
structure.

\begin{lem}\lemlabel{triangulate}
Let $T^*$ be any linear decision tree for a classification problem
$\mathcal{P}$ over $\R^2$.  Then there exists a linear decision tree
$T'$ for $\mathcal{P}$, such that, for each leaf $\ell$ of $T'$,
$\closure(r(\ell))$ is a triangle and $T'$ satisifies
\[
    \mu_D(T') \le \mu_D(T^*) + O(\log\mu_D(T^*))
\]
for any probability measure $D$ over $\R^2$.
\end{lem}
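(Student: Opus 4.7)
The plan is to leave $T^*$ largely intact and only refine the regions at its leaves. From the preliminaries, for each leaf $\ell$ of $T^*$ the closure $\closure(r(\ell))$ is a convex polygon, and its number $k_\ell$ of bounding sides satisfies $k_\ell\le\depth(\ell)$, since each bounding side is supported by the line of some comparison on the root-to-leaf path. Because $T^*$ solves $\mathcal{P}$, the classification is constant on $r(\ell)$, so I am free to replace $\ell$ by any subtree that partitions $\closure(r(\ell))$ further, provided every new leaf is labelled with the same value $d(\ell)$.

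For each such $\ell$, I attach a fan-triangulation subtree. If $k_\ell\le 3$, the region is already a triangle in the sense of the paper (the intersection of at most 3 halfplanes) and nothing needs to be done; the degenerate unbounded cases (strips, halfplanes, wedges) all fall here. Otherwise, fix a vertex $v$ of $\closure(r(\ell))$ and consider the lines through $v$ and each of the other vertices; these cut the polygon into $k_\ell-2$ (possibly unbounded) triangles that share $v$. A standard balanced binary search over these $v$-lines, each being a linear inequality, gives a linear decision subtree of depth $O(\log k_\ell)$ all of whose leaves have triangular regions, and we label every such leaf with $d(\ell)$. Grafting these subtrees onto $T^*$ yields the desired linear decision tree $T'$.

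To bound the expected cost, a query that reached $\ell$ in $T^*$ now performs $O(\log(k_\ell+1))$ additional comparisons, so for some constant $c$,
\[
\mu_D(T')\;\le\;\mu_D(T^*)\;+\;c\sum_{\ell\in L(T^*)}\Pr(r(\ell))\,\log(k_\ell+1).
\]
Using $k_\ell\le\depth(\ell)$ together with $\sum_\ell\Pr(r(\ell))=1$, Jensen's inequality applied to the concave function $x\mapsto\log(x+1)$ gives
\[
\sum_{\ell}\Pr(r(\ell))\log(k_\ell+1)\;\le\;\log\!\Bigl(1+\sum_{\ell}\Pr(r(\ell))\,\depth(\ell)\Bigr)\;=\;\log(1+\mu_D(T^*)),
\]
which delivers the claimed bound $\mu_D(T')\le\mu_D(T^*)+O(\log\mu_D(T^*))$ (reading the right-hand side charitably for small $\mu_D(T^*)$, e.g.\ as $O(1+\log(1+\mu_D(T^*)))$).

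The only delicate point I anticipate is the fan triangulation for unbounded convex regions: one has to check that the fan rays from a chosen vertex really do decompose $\closure(r(\ell))$ into objects that are triangles in the paper's sense (intersections of at most 3 closed halfplanes), and that these rays can be searched by an ordinary balanced binary tree of depth $O(\log k_\ell)$. Both are routine once one handles the vertexless cases separately, and once this is verified the proof is essentially the single application of Jensen's inequality above.
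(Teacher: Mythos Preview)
Your proposal is correct and follows essentially the same approach as the paper: bound the number of sides $k_\ell$ of each leaf region by $\depth(\ell)$, replace each leaf by a balanced subtree of depth $O(\log k_\ell)$ whose leaves are triangles, and conclude via Jensen's inequality. The paper phrases the subtree construction as ``repeatedly splitting the leaf into two children whose regions have $\lceil(k+2)/2\rceil$ and $\lfloor(k+2)/2\rfloor$ vertices,'' which is exactly what your fan-triangulation-plus-balanced-binary-search does when the first comparison is against the median fan line; the two descriptions are the same construction viewed differently.
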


\begin{proof}
Each leaf $\ell$ of $T^*$ has a region $r(\ell)$ that is a convex
polygon.  If $r(\ell)$ has $k$ sides then the depth of $\ell$ in $T$
is at least $k$.  To obtain the tree $T'$ replace each such leaf
$\ell$ of $T^*$ by a balanced binary tree of depth $O(\log k)$ by
repeatedly splitting the leaf into two children $\ell_1$ and $\ell_2$
whose regions have $\ceil{(k+2)/2}$ and $\floor{(k+2)/2}$ vertices.
For a leaf $\ell\in L(T^*)$, let $s(\ell)$ denote the set of leaves in $T'$
in the subtree of $\ell$.   Then
\begin{eqnarray*}
   \mu_D(T^*) 
     &  =  & \sum_{\ell\in L(T^*)} \Pr(r(\ell))\cdot \depth(\ell) \\
     &  =  & \sum_{\ell\in L(T^*)}\sum_{\ell'\in s(\ell)} 
              \Pr(r(\ell'))\cdot \depth(\ell) \\
     & \ge & \sum_{\ell\in L(T^*)} 
             \sum_{\ell'\in s(\ell)}\Pr(r(\ell'))\cdot (\depth(\ell')
                   - O(\log (\depth(\ell)))) \\
     &  =  & \mu_D(T') - \sum_{\ell\in L(T^*)} 
             \sum_{\ell'\in s(\ell)}\Pr(r(\ell'))\cdot O(\log (\depth(\ell))) \\
     &  =  & \mu_D(T') - \sum_{\ell\in L(T^*)} 
             \Pr(r(\ell))\cdot O(\log (\depth(\ell))) \\
     & \ge & \mu_D(T') - O(\log(\mu_D(T^*)) \enspace , 
\end{eqnarray*}
where the last inequality is an application of Jensen's Inequality.
\end{proof}

\lemref{triangulate} says that for any linear decision tree for point
location, there is an underlying triangulation.  The entropy of this
triangulation gives a lower bound on the cost of the decision tree.
Thus, the entropy of a minimum entropy triangulation gives a lower
bound on the expected cost of any linear decision tree for point
location.

Keeping the above in mind, our point location structure is simple.
Let $G$ be a connected planar subdivision whose faces are
$F=\{F_1,\ldots,F_m\}$ and let $D$ be a probability measure over
$\R^2$.  We assume, without loss of generality that the outer face of
$G$ is the complement of a triangle, since otherwise we can add at
most 3 vertices and 4 edges to $G$ to make this true.  Adding these
edges will not increase the entropy the minimum weight triangulation
of $G$ by more than a constant.  With this assumption, testing if the
query point is in the outer face of $G$ can be done using 3 linear
comparisons after which we may safely assume that the query point is
contained in an internal face of $G$.

We triangulate each internal face $F_i$ of $G$ (a near-simple polygon)
using \thmref{min-H-triangulation} to obtain a triangulation
$\Delta_i$. The union of all $\Delta_i$
is a triangulation $\Delta$ of $\R^2$, to which we apply
\thmref{ammw07} to obtain a point location structure $R=R(G,D)$ for
point location in $\Delta$ and hence also in $G$.  The following
theorem shows that $R$ is nearly optimal:

\begin{thm}
Given a connected planar subdivision $G$ with $n$ vertices and a probability
measure $D$ over $\R^2$, a data structure $R=R(G,D)$ of size $O(n)$ can be
constructed in $O(n\log n)$ time that answers point location queries in $G$.
The expected number of point/line comparisons performed by $R$, 
for a point $p$ drawn according to $D$ is 
\[
  \mu_D(R) \le \mu_D(T^*) + O(\mu_D(T^*)^{2/3}+1) \enspace , 
\] 
where $T^*$ is any linear classification tree that answers point
location queries in $G$.
\end{thm}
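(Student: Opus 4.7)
The plan is to construct $R$ by composing Theorem~\ref{thm:min-H-triangulation} with Theorem~\ref{thm:ammw07}, and then to match the resulting upper bound against $\mu_D(T^*)$ using Lemma~\ref{lem:triangulate} together with Shannon's theorem. Concretely, I would first augment $G$ with at most three vertices and four edges so that its outer face becomes the complement of a triangle (adding only $O(1)$ to the entropy of any triangulation), and then apply Theorem~\ref{thm:min-H-triangulation} to each internal face $F_i$ with the induced measure $D\mid F_i$ to produce a triangulation $\Delta_i$ of $F_i$ with $O(|F_i|)$ triangles in $O(|F_i|\log|F_i|)$ time. Summed over the faces, $\Delta=\bigcup_i\Delta_i$ has $O(n)$ triangles, is built in $O(n\log n)$ time, and is a triangulation of $\R^2$; feeding it to Theorem~\ref{thm:ammw07} then delivers $R$ of size $O(n)$ with
\[
  \mu_D(R) \le H(\Delta) + O(H(\Delta)^{1/2}+1).
\]

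The heart of the proof will be to show $H(\Delta) \le \mu_D(T^*) + O(\mu_D(T^*)^{2/3}+1)$ for any linear decision tree $T^*$ that solves point location in $G$. First I would apply Lemma~\ref{lem:triangulate} to $T^*$, obtaining a linear decision tree $T'$ whose leaf regions have triangular closures and which satisfies $\mu_D(T') \le \mu_D(T^*) + O(\log\mu_D(T^*))$. Since $T'$ still solves point location in $G$, each such triangle must lie inside a single face $F_i$, so the triangulation $\Delta^*$ of $\R^2$ determined by $T'$ restricts on each face to a triangulation $\Delta^*_i$ of $F_i$. Writing $H_F=\sum_i\Pr(F_i)\log(1/\Pr(F_i))$, the chain rule for entropy yields
\[
  H(\Delta) = H_F + \sum_i \Pr(F_i)\,H(\Delta_i\mid F_i), \qquad
  H(\Delta^*) = H_F + \sum_i \Pr(F_i)\,H(\Delta^*_i\mid F_i),
\]
and Theorem~\ref{thm:min-H-triangulation} applied face-by-face gives $H(\Delta_i\mid F_i) \le H(\Delta^*_i\mid F_i) + O(H(\Delta^*_i\mid F_i)^{2/3}+1)$. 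Averaging with weights $\Pr(F_i)$ keeps the additive $O(1)$ error bounded by $O(1)$ overall, and concavity of $x\mapsto x^{2/3}$ lets Jensen's inequality absorb the fractional-power terms, since $\sum_i \Pr(F_i)\,H(\Delta^*_i\mid F_i)^{2/3} \le \bigl(\sum_i \Pr(F_i)\,H(\Delta^*_i\mid F_i)\bigr)^{2/3} \le H(\Delta^*)^{2/3}$. Putting these together yields $H(\Delta) \le H(\Delta^*) + O(H(\Delta^*)^{2/3}+1)$.

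To finish, Theorem~\ref{thm:shannon} gives $H(\Delta^*) \le \mu_D(T')$, so chaining the three inequalities produces the claimed bound once I observe that the $\log\mu_D(T^*)$ coming from Lemma~\ref{lem:triangulate} and the $H(\Delta)^{1/2}$ coming from Theorem~\ref{thm:ammw07} are both dominated by the $O(\mu_D(T^*)^{2/3}+1)$ slack. The main obstacle I anticipate is precisely the per-face aggregation: naively summing the $O(1)$ additive constants from Theorem~\ref{thm:min-H-triangulation} across all $m$ faces could contribute as much as $\Theta(n)$ and destroy the bound, and naively summing the $O((\cdot)^{2/3})$ terms would require controlling $\sum_i H(\Delta^*_i\mid F_i)^{2/3}$ rather than $\sum_i \Pr(F_i) H(\Delta^*_i\mid F_i)^{2/3}$. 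The weighting by $\Pr(F_i)$ supplied by the entropy chain rule, together with Jensen on the concave function $x^{2/3}$, is exactly what tames both of these potential blow-ups.
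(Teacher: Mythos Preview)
Your proposal is correct and follows essentially the same route as the paper: triangulate each face via \thmref{min-H-triangulation}, apply \thmref{ammw07}, then compare against $T^*$ by passing through \lemref{triangulate}, using the entropy chain rule face-by-face, Jensen on $x\mapsto x^{2/3}$, and Shannon's bound. Your handling of the per-face aggregation (weighting by $\Pr(F_i)$ so that the additive $O(1)$ terms sum to $O(1)$ and the $2/3$-power terms combine via concavity) is exactly the mechanism the paper uses, stated if anything a bit more cleanly.
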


\begin{proof}
The space and preprocessing requirements follow from
\thmref{min-H-triangulation} and \thmref{ammw07}.
To prove the bound on the expected query time, apply
\lemref{triangulate} to the tree $T^*$ and consider the resulting tree
$T'$, each of whose leaves have regions that are triangles and such
that
\begin{equation}
     \mu_D(T') \le \mu_D(T^*) + O(\log \mu_D(T^*)) \enspace .
       \eqlabelx{t-triangle}
\end{equation}
Observe that each leaf of $T'$ corresponds to a triangle in $\R^2$
that is completely contained in one of the faces of $G$.  Let
$\Delta'$ denote this set of triangles and let $\Delta'_i$ denote the
subset of $\Delta'$ contained in $F_i$.
Consider the entropy $H(\Delta')$ of the distribution induced by the
leaves of $T'$:
\begin{equation}\eqlabelx{h-delta-prime}
 \begin{split}
   H(\Delta') 
     & =  \sum_{i=1}^m \sum_{t\in \Delta'_i}\Pr(t)\log(1/\Pr(t)) \\
     & =  \sum_{i=1}^m \Pr(F_i)\sum_{t\in \Delta'_i}
            \Pr(t|F_i)\log(1/\Pr(t)) \\
     & =  \sum_{i=1}^m \Pr(F_i)\sum_{t\in \Delta'_i}
            \Pr(t|F_i)
            \left(
              \log(1/\Pr(t|F_i))-\log(\Pr(F_i))
            \right) \\
     & =  \sum_{i=1}^m \Pr(F_i)\sum_{t\in \Delta'_i}
            \Pr(t|F_i)\log(1/\Pr(t|F_i)) 
             + \sum_{i=1}^m \Pr(F_i) \log(1/\Pr(F_i)) \\
     & =  \sum_{i=1}^m \Pr(F_i) H(\Delta'_i) + H(F) \enspace .
 \end{split}
\end{equation}
Similarly, the entropy of $\Delta$ is given by 
\begin{eqnarray*}
   H(\Delta) 
     & = & \sum_{i=1}^m \sum_{t\in \Delta_i}\Pr(t)\log(1/\Pr(t)) \\
     & = & \sum_{i=1}^m \Pr(F_i)\sum_{t\in \Delta_i}
            \Pr(t|F_i)\log(1/\Pr(t|F_i)) 
          + H(F) \\
     & = & \sum_{i=1}^m \Pr(F_i) H(\Delta_i) + H(F) \enspace .
      \eqlabelx{h-delta}
\end{eqnarray*}
By \thmref{min-H-triangulation}, the triangles in $\Delta_i$ form a
nearly-minimum entropy triangulation of $F_i$.  More specifically, 
\begin{equation}
   H(\Delta_i) \le H(\Delta'_i) + O(H(\Delta'_i)^{2/3}+1)  \enspace .
    \eqlabelx{h-delta-i}
\end{equation}
Putting this all together, we have
\[
 \begin{aligned}
  H(\Delta) 
    &  =  \sum_{i=1}^m\Pr(F_i) H(\Delta_i) + H(F) 
             \\ 
    & \le \sum_{i=1}^m\Pr(F_i) (H(\Delta'_i) 
             + O(H(\Delta'_i)^{2/3}+1)) + H(F) 
             && \text{\hfill{(by \eqrefx{h-delta-i})}} \\ 
    &  =  H(\Delta') + \sum_{i=1}^m\Pr(F_i) O(H(\Delta'_i)^{2/3}+1)
             && \text{(by \eqrefx{h-delta-prime})} \\ 
    &  =  H(\Delta') + 
             \left(
               \sum_{i=1}^m\Pr(F_i) O(H(\Delta'_i)
             \right)^{2/3} + O(1)
             && \text{(by Jensen's Inequality)} \\
    &  =  H(\Delta') + 
             \left(
               O(1) \cdot
                \sum_{i=1}^m\Pr(F_i)
                 \sum_{t'\in\Delta'_i}
                  \Pr(t'|F_i)\log(1/\Pr(t'|F_i))
             \right)^{2/3} + O(1)
              \\
    &  =  H(\Delta') + 
             \left(
               O(1) \cdot
                \sum_{i=1}^m
                 \sum_{t'\in\Delta'_i}
                  \Pr(t')\log(\Pr(F_i)/\Pr(t'))
             \right)^{2/3} + O(1)
              \\
    & \le H(\Delta') + 
             \left(
               O(1) \cdot
                \sum_{i=1}^m
                 \sum_{t'\in\Delta'_i}
                  \Pr(t')\log(1/\Pr(t'))
             \right)^{2/3} + O(1)
             \\
    &  =  H(\Delta') + O(H(\Delta')^{2/3}+1) 
             \\
    & \le \mu_D(T') + O(\mu_D(T')^{2/3}+1) 
             && \text{(by \thmref{shannon})} \\
    & \le \mu_D(T^*) + O(\mu_D(T^*)^{2/3}+1) 
             && \text{(by \eqrefx{t-triangle})} 
 \end{aligned}
\]
Finally, since we preprocess $\Delta$ using \thmref{ammw07}, the
expected number of comparisons required to answer a query is
\begin{eqnarray*}
  \mu_D(R) 
   & = & H(\Delta) + O(H(\Delta)^{1/2} + 1) \\
   & \le & \mu_D(T^*) +  O(\mu_D(T^*)^{2/3} + 1) 
\end{eqnarray*}
and this completes the proof, and the paper.
\end{proof}

\bibliographystyle{plain}
\bibliography{entropy2}

\end{document}